\tikzstyle{starosaba2025dtstop} = [rectangle, rounded corners, minimum width=3.5cm, minimum height=0.7cm, text centered, draw=black, fill=blue!20]
\tikzstyle{process} = [rectangle, minimum width=3.5cm, minimum height=0.7cm, text centered, draw=black, fill=green!20]
\tikzstyle{arrow} = [thick,->,>=stealth]
\tikzstyle{comment} = [text width=3cm, text centered, font=\small, fill=white]
\journal{Future Generation Computer Systems}
\begin{document}

\begin{frontmatter}

%\title{Evolutionary Multitask Optimization: What, Why, and Most importantly: What is not}
\title{Addressing the Minor-Embedding Problem in Quantum Annealing \\ and Evaluating State-of-the-Art Algorithm Performance}

\author[tec,upv]{Aitor Gómez-Tejedor}
\ead{aitor.gomez-tejedor@tecnalia.com}
\author[tec]{Eneko Osaba}
\ead{eneko.osaba@tecnalia.com}
\author[tec]{Esther Villar-Rodriguez}
\ead{esther.villar@tecnalia.com}

\address[tec]{TECNALIA, Basque Research and Technology Alliance (BRTA), P. Tecnologico, Ed. 700, 48160 Derio, Spain}
\address[upv]{University of the Basque Country (UPV/EHU), 48013 Bilbao, Spain}

\begin{abstract}
This study addresses the minor-embedding problem, which involves mapping the variables of an Ising model onto a quantum annealing processor. The primary motivation stems from the observed performance disparity of quantum annealers when solving problems suited to the processor's architecture versus those with non-hardware-native topologies. Our research has two main objectives: \textit{i)} to analyze the impact of embedding quality on the performance of D-Wave Systems quantum annealers, and \textit{ii)} to evaluate the quality of the embeddings generated by Minorminer, the standard minor-embedding technique in the quantum annealing literature, provided by D-Wave. Regarding the first objective, our experiments reveal a clear correlation between the average chain length of embeddings and the relative errors of the solutions sampled. This underscores the critical influence of embedding quality on quantum annealing performance. For the second objective, we evaluate Minorminer’s embedding capabilities, the quality and robustness of its embeddings, and its execution-time performance \textcolor{black}{on Erdös-Rényi graphs}. We also compare its performance with Clique Embedding, another algorithm developed by D-Wave, which is deterministic and designed to embed fully connected Ising models into quantum annealing processors, serving as a worst-case scenario. The results demonstrate that there is significant room for improvement for Minorminer, suggesting that more effective embedding strategies could lead to meaningful gains in quantum annealing performance.
\end{abstract}

\begin{keyword}
Quantum Computing \sep Quantum Annealing \sep D-Wave \sep Graph Theory \sep Minor-embedding \sep Minorminer.
\end{keyword}

\end{frontmatter}

\section{Introduction}\label{sec:intro}

Quantum computing (QC) is recognized as the next significant advancement in computing, with substantial potential across various fields, including combinatorial optimization. In this domain, QC can address problems in fundamentally different ways compared to classical methods, potentially offering significant advantages. 

Despite its promising potential, we are currently in the \textit{noisy intermediate scale quantum (NISQ)} \cite{preskill2018quantum} era, which means that the technology is still in early development and faces hardware limitations. These limitations include the poor quality of qubits, which causes undesirable behavior, and a reduced number of qubits and couplers (connections between pairs of qubits) in the processors. In spite of these restrictions, extensive research is being conducted on the software and algorithmic aspects of the paradigm. Much of this research aims to address these constraints and maximize the potential of current NISQ-era quantum computers. Examples of research areas within this category include the design of hybrid methods that integrate classical and quantum processors \cite{peruzzo2014variational, farhi2014quantum, HSS}, as well as error correction and error mitigation techniques \cite{lidar2013quantum, cai2023quantum}.

Among the various types of quantum computers, the gate-model and quantum annealing are the most prominent. This study focuses on the latter. Quantum annealing (QA) is grounded in the principles of adiabatic quantum computing, as initially proposed by Farhi et al. \cite{farhi2000quantum}. The adiabatic quantum computing paradigm starts by initializing a quantum system in an easy-to-prepare ground state. Then, the system's time-dependent Hamiltonian is slowly evolved within the adiabatic theorem \cite{messiah1961quantum} regime until its final form is reached, where the state desired to be computed is encoded as the ground state. By evolving the system slowly enough, the adiabatic theorem states that there is a high probability that the final state of the system will be the desired ground state. Quantum annealing modifies this approach by relaxing the requirement for slow evolution, which is dependent on the problem and can be exponentially large. This relaxation allows for faster execution times, albeit with a reduced probability of achieving the ground state.

There are different models for building quantum annealers, such as those based on Rydberg atoms \cite{glaetzle2017coherent}, trapped ions \cite{raventos2018semiclassical} or superconducting flux qubits \cite{johnson2011quantum}, with the latter being the most recognized to date. Additionally, several companies are working on this technology and building their own devices, such as NEC\footnote{\url{https://parityqc.com/a-new-quantum-annealer-by-nec}}, Qilimanjaro\footnote{\url{https://qilimanjaro.tech}}, and D-Wave Systems\footnote{\url{https://www.dwavesys.com}}.

In the literature, the reference quantum annealers are those developed by
D-Wave Systems, which are specifically designed to solve Ising models by sampling low-energy states of a given Hamiltonian in which the Ising model is encoded \cite{dwave2025}. Consequently, these annealers forfeit the computational universality inherent in general quantum annealing. However, the Ising model is mathematically equivalent to solving a quadratic unconstrained binary optimization (QUBO) model, and many combinatorial optimization problems can be formulated as QUBOs \cite{glover2018tutorial}, and therefore, as Ising models \cite{lucas2014ising}.

This study exhaustively analyzes the \textit{minor-embedding} problem, a graph theoretical problem responsible for mapping the variables of the Ising model onto the quantum annealing processor. The motivation lies in the observed performance disparity of D-Wave's processors when addressing hardware-native problems versus real-world optimization problems with generic topologies. Specifically, quantum annealers exhibit competitive performance with certain classical solvers when solving problems that are inherently suited to the processor's architecture, thereby eliminating the need for minor-embedding, as shown in \cite{tasseff2024emerging} and \cite{bauza2024scaling}. However, when applied to optimization tasks whose Ising model formulation has a non-hardware-native topology, the solution quality produced by quantum annealers is significantly inferior to that achieved by classical solvers \cite{mohseni2022ising, willsch2022benchmarking}. This disparity highlights the importance of the minor-embedding problem in the context of quantum annealing.

Furthermore, the mapping is not only critical but also highly complex to solve. The challenge arises from the limited connectivity of the processors, making the mapping of any Ising model non-trivial. In fact, the lower the hardware connectivity, the lower the likelihood that a problem can be trivially mapped onto the hardware. For instance, consider an Ising model where a variable interacts with 16 others, while the hardware qubits are only connected to 15 other qubits. This discrepancy hinders a direct one-to-one mapping of variables to qubits, necessitating the use of multiple qubits to represent a single variable.

In this context, solving the minor-embedding problem while minimizing the number of qubits in the embedding is crucial. Using more qubits increases the solution space and, given the noise in the processors and its random nature, it also raises the error rate. However, finding an embedding for any Ising model into any processor while minimizing the number of qubits is an NP-Hard problem. %as will be seen later in this work.

Note that the minor-embedding problem in the context of quantum annealing will remain in that complexity class as long as QA processors do not possess all-to-all connectivity. This condition is likely to persist in the future because, due to the 2D nature of superconducting-based processors, it is extremely challenging to construct all-to-all hardware of non-small sizes~\cite{bunyk_architectural_2014}.

Because of the complexity of the minor-embedding problem, heuristic methods have to be employed for its solving. In the literature, the standard algorithm to solve the minor-embedding problem is \textit{Minorminer}. This method, proposed by Cai et al. in 2014 \cite{cai2014practicalheuristicfindinggraph}, uses a greedy approach that iteratively constructs the embedding while minimizing at each step the total number qubits used to represent each variable. 
% The algorithm is studied in detail in Section \ref{sec:MM}. 
Minorminer is accessible in the D-Wave problem solving platform \cite{dwave2025} as the standard to embed generic problems into the different available processors. Besides Minorminer, D-Wave offers another minor-embedding method specific to embed fully connected problems, called \textit{Clique Embedding} (CE, \cite{boothby2016fast}). This algorithm gives an embedding in polynomial time for a fully connected instance of a given size.

In addition to the methods provided by D-Wave, various independent research teams have proposed alternative approaches, including the following:
\begin{itemize}
    \item Researchers in \cite{pinilla2019layout} present Layout-Aware Embedding, which utilizes the “location information” of variables within a problem to guide mapping heuristics. This location information can either be an inherent characteristic of the problem or be calculated using graph theory node placement algorithms. Once an initial embedding is achieved through this method, the Minorminer algorithm is employed to optimize it.
    
    \item In \cite{zbinden2020embedding}, two algorithms are proposed to search for an initial embedding, which is subsequently taken as an initial solution by Minorminer. The first method, termed Clique-Based Minorminer, utilizes the CE method by D-Wave to trivially embed a subset of variables, leaving the remaining variables unassigned to qubits. %This partially constructed embedding is then refined by the Minorminer algorithm. 
    The second one, named Spring-Based Minorminer, is analogous to the layout-aware embedding algorithm presented in \cite{pinilla2019layout}. In a nutshell, it employs a tuned Fruchterman-Reingold spring algorithm to calculate a layout for positioning the variables and interactions of the problem in a plane. This layout information is then used to assign variables to qubits and construct an initial embedding.
    
    \item The team behind \cite{Sugie} present Probabilistic-Swap-Shift-Annealing, which is a minor-embedding algorithm inspired by the principles of simulated annealing. It begins by generating an initial embedding and then iteratively modifies it through probabilistic decision-making to enhance the embedding over successive iterations.

    \item In \cite{goodrich2018optimizing}, the authors introduce the idea of a virtual hardware layer, which consists of a precomputed embedding of a biclique graph into the hardware, making the user calculate embeddings to the biclique instead of to the topology graph. The idea behind this method is that they propose a heuristic algorithm to calculate embeddings of problems into the biclique virtual hardware based on the graph theoretic concept of odd cycle transversal decomposition, which is less costly than embedding instances directly to the hardware.

    \item  In \cite{bernal2020integer}, the minor-embedding problem is reformulated and solved using Integer Programming tools.
\end{itemize}

% Despite the interest in these alternative methods, none have consistently surpassed Minorminer in terms of embedding efficiency, minimizing the required number of qubits, and computational effectiveness to warrant a change in its standard status. As a result, Minorminer remains the benchmark in the field. 

Despite the high interest of the alternative minor-embedding methods—some of which have demonstrated competitive or even superior performance to Minorminer on specific problem instances, as reported in their respective original studies—Minorminer remains the most widely adopted and standardized approach. This is primarily due to its role as the default embedding algorithm within D-Wave’s Ocean SDK.\footnote{\url{https://docs.dwavequantum.com/en/latest/ocean/api_ref_minorminer/source/index.html}}. Accordingly, our study focuses on Minorminer, as it represents both the state-of-the-art and the most widely used method in practical applications. This choice enables us to assess performance under conditions that are most representative of current usage in the field.

The objective of this work is twofold:
\begin{enumerate}
    \item Reveal the critical influence that the minor-embedding problem has on the final performance of quantum annealing. This is, the correlation between the quality of the embedding (solution to the minor-embedding problem given by Minorminer) and the quality of the final solutions given by the quantum annealer. This publication contains an in-depth study of this influence with both theoretical and experimental results. 
    
    \item Upon recognizing the significance of embedding quality, this work conducts a detailed investigation into the effectiveness of the Minorminer algorithm in producing high-quality embeddings. The study includes a comprehensive experimental evaluation of Minorminer’s performance \textcolor{black}{embedding Erdös-Rényi graphs \cite{erdds1959random}}, with comparisons to the CE method, which is treated as a worst-case baseline due to the assumption that embeddings for fully connected instances should require more qubit demanding embeddings than for instances of lower density. 
\end{enumerate}

The rest of the paper is organized as follows: the subsequent section provides background information.
% Following this, the Minorminer algorithm and its strategy are theoretically analyzed in detail. 
Section \ref{sec:experimentation} presents the experimental results of the study. Firstly, it evaluates the relationship between the quality of the embedding and the quality of the final solution to the problem. Secondly, it provides an experimental assessment of Minorminer to measure its capabilities and performance. This paper finishes highlighting the main conclusions and further work in Section \ref{sec:conclusions}.

\section{Background}\label{sec:background}
The background information is divided into two parts. Subsection \ref{sec:process} provides a comprehensive overview of the quantum annealing process, detailing the steps required to solve a combinatorial optimization problem using a quantum annealer. This explanation sets the context for the Minor-Embedding problem, which is then thoroughly discussed in Subsection \ref{sec:MEProblem}.

\subsection{The Quantum Annealing Process}
\label{sec:process}

Solving a combinatorial optimization problem with a quantum annealer that samples low energy solutions of programmable Hamiltonians is a process composed of different steps, which are represented in Figure \ref{fig:diagrama_flujo}:

\begin{figure*}[t]
    \centering
    \includegraphics[scale = 0.38]{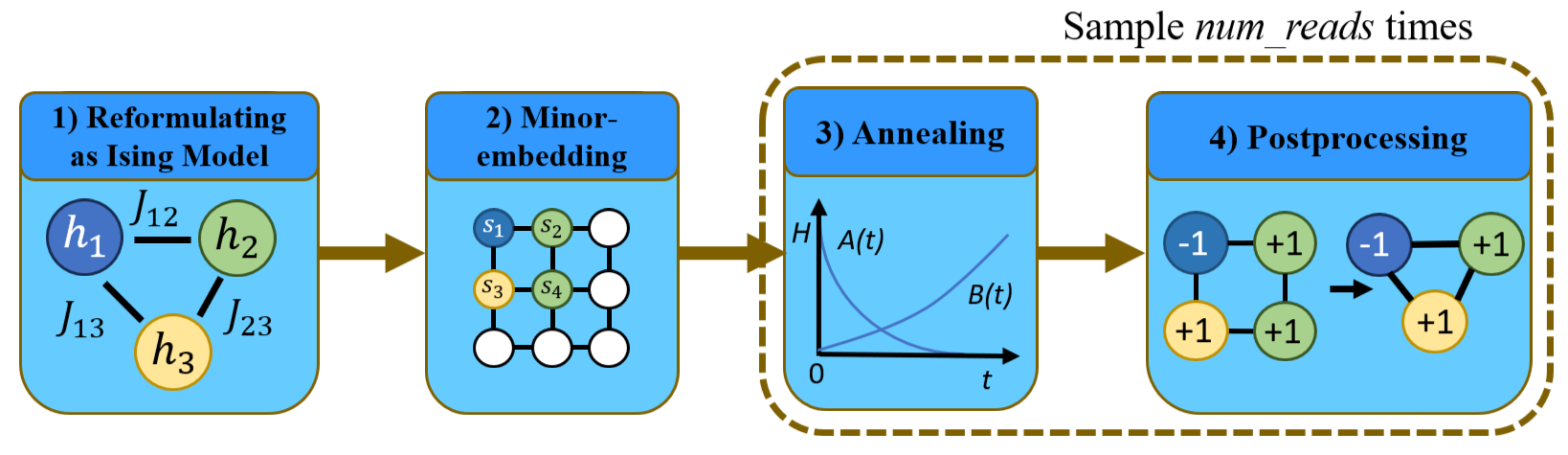}
    \caption{Flux diagram of the quantum annealing problem solving process in D-Wave Systems annealers. The image is an adapted version of the scheme originally presented by Goodrich et. al. \cite{goodrich2018optimizing}.}
    \label{fig:diagrama_flujo}
\end{figure*} 

\begin{enumerate}
    \item \textit{Reformulating the Combinatorial Optimization Problem as an Ising Model}: First, the problem must be modeled as an Ising model. This can be accomplished using various mathematical techniques if the problem involves non-binary discrete variables and/or constraints \cite{lucas2014ising}.
    
    \item \textit{Minor-embedding}: Then the Ising model must be adapted to fit into the quantum annealing processor. These processors consist of qubits and couplers between pairs of qubits, which sample low-energy states of Hamiltonians represented by the following form:
    
    \begin{equation}\label{sampleable_hamiltonians}
        \sum_{i \in V} h_i \sigma_z^{(i)} + \sum_{(i,j) \in E} J_{i,j} \sigma_z^{(i)} \sigma_z^{(j)}
    \end{equation}
    
    where $\sigma_z^{(i)}$ is the Pauli z operator acting on qubit $i$, and $V$ and $E$ are the sets of qubits and couplers in the given processor, respectively. For each existing qubit $i\in V$, $h_i$ is its bias, and for each existing coupler $(i,j)\in E$, $J_{i,j}$ is its weight. After each execution of the annealer, the state of each qubit is measured, resulting in solutions to the following Ising model:
    
    \begin{equation}\label{FO}
        \sum_{i \in V} h_i s_i + \sum_{(i,j) \in E} J_{i,j} s_i s_j
    \end{equation}
    
    where $s_i$ are the variables taking values $+1$ or $-1$ for all $i \in V$.
    
    QAs can be regarded as programmable Ising model solvers. However, their programmability is constrained by the characteristics of the processors. The values $h_i$ and $J_{i,j}$ can only be adjusted within a certain range, and the sets $V$ and $E$, which define the possible binary variables and their interactions, respectively, are dependent on the architecture of the processor.
    
    By considering the set of qubits $V$ as the node set and the set of couplers $E$ as the edge set, the processor's architecture is represented by the \textit{topology graph} $G = (V, E)$. The Ising model can also be represented by the \textit{problem graph}, where the nodes correspond to the variables and the edges represent the non-zero quadratic terms that connect pairs of variables. Representing both the problem and the hardware as graphs is crucial because the Ising models that can be solved on these machines are those whose problem graphs are subgraphs of the processor's topology. However, due to connectivity limitations in hardware, it is highly unlikely that the problem graph of an Ising model for a real combinatorial optimization problem is naturally a subgraph of any current quantum annealer’s topology.
    
    In this context, addressing the minor-embedding problem is required, which is done by intelligently selecting qubits or sets of qubits to represent the variables of the Ising model. The latter case is occasionally necessary, as this approach allows for the representation of variables with higher connectivity than the degree of the topology graph, thereby enabling the representation of an exponentially large number of Ising models. Historically, the sets of qubits assigned to single variables are termed \textit{chains}. However, these chains are not strictly linear paths of qubits; they can instead form arbitrarily connected subgraphs.
    
    Remark that there is not a unique way of embedding a given problem graph into a given hardware graph, and it is crucial to find the one which employs the least amount of qubits possible. This is because an increase in the number of qubits expands the solution space that the annealer must explore, which, due to imperfections in current processors, leads to a degradation in solution quality.

    Upon solving the minor-embedding problem, the original Ising model is transformed into the \textit{embedded Ising model}. The graph of this new model is a minor of the hardware's topology graph, and penalties are introduced to promote that qubits forming chains attain the same value at the end of the anneal. The strength of these penalties is adjusted using the \textit{chain strength} parameter, and selecting an appropriate value for this strength is critical: values that are too low do not ensure the mentioned qubit value stability through the chains, while values that are too high create a large energy gap between penalized and unpenalized solutions, compared to the difference between good and bad solutions of the original problem, making it difficult for the annealer to find the optimal solutions \cite{choi2008minor, king2014algorithm, raymond2020improving}. In the D-Wave paradigm the chain strength is determined through the \textit{Uniform Torque Compensation} (UTC) function, first introduced in \cite{raymond2020improving}. This function operates with a user-defined prefactor, which has a default value of $1.414$. Anyway, this is a problem-dependent parameter, so it is recommended to be empirically defined for each instance and embedding within the $[0.5, 2]$ range \cite{dwave2025, hamerly2019experimental}.
    
    Additionally, variable biases and interaction weights are adjusted to promote that the solution space encompasses that of the original problem. Consequently, the embedded Ising model can be solved via quantum annealing, and its solutions can be efficiently translated back to the original problem.
    
    \item \textit{Quantum Annealing}: This is the step executed by the quantum annealing processor. The quantum annealer is initialized in a physically easy-to-prepare state, and its Hamiltonian is gradually modified until it matches the problem's Hamiltonian in Equation \eqref{sampleable_hamiltonians}, where the embedded Ising model is encoded. At the end of the evolution, each qubit's state is measured to obtain a solution to the embedded Ising model. Due to imperfections in current annealers and the probabilistic nature of the method, the annealer is run multiple times to generate a sample of solutions. The number of executions, referred to as \textit{number of reads}, is determined by the user; however, it is recommended to perform at least 1000 executions \cite{dwave2025}.
    
    \item \textit{Postprocessing}: The sample must be postprocessed to get solutions to the Ising model. This process involves taking the final value of each qubit to determine the value of the corresponding Ising variable. When a variable is represented by a chain, two scenarios are possible: all qubits in the chain have taken the same value (i.e. there is a consensus), or qubits in the chain have differed in the readout of the sampling. The latter scenario is referred to as a \textit{broken chain}. In D-Wave's quantum annealing paradigm, a broken chain is resolved through a majority vote among the qubits in the chain, assigning the winning value to the variable. Lastly, the solution to the Ising model is transformed to be of the form of the original combinatorial optimization problem.
    
\end{enumerate}

Note that the quantum annealing approach described above is not applicable to problems of all sizes. When the Ising model exceeds a specific size and connectivity threshold, finding a minor-embedding into the topology graph becomes infeasible. In such cases, quantum annealing based hybrid computing methods, such as those offered by D-Wave \cite{HSS}, can be employed. However, this research area falls beyond the scope of this paper, as it involves additional classical methods beyond minor-embedding.

\subsection{The Minor-Embedding Problem}
\label{sec:MEProblem}

The minor-embedding problem emerged in the field of graph theory well before the advent of quantum computing. This problem has been examined in numerous publications that explore its complexity and propose algorithms, such as those by Matoušek (1992) \cite{matouvsek1992complexity} and Robertson and Seymour (1995) \cite{robertson1995graph} (the latter under the name of the minor containment problem). Later, Vicky Choi (2011) \cite{choi2011minor} redefined the problem specifically for quantum annealing applications.

As mentioned in the previous section, minimizing the number of qubits is crucial when embedding Ising models into quantum annealing hardware. Therefore, in this work, we address the Minor-Embedding problem not only in the conventional sense of finding an embedding, but rather as an optimization problem aimed at identifying the minimal embedding in terms of qubit usage. We refer to this as the Minimal Minor-Embedding Problem, which's definition is the following. In terms of notation, we represent the node set and edge set of a graph $G$ as $V(G)$ and $E(G)$, respectively, and the power set of a set $A$, which is the set of all its subsets, as $\mathcal{P}(A)$.

\begin{definition}{}{def}
  Given a pair of graphs, called source graph $H$ and target graph $G$, the minimal minor-embedding problem is the problem of finding the minimum minor of $G$ that, by contraction of connected nodes, transforms into a graph isomorphic to $H$.
  
  The solution to the minor-embedding problem is an \textit{embedding} or \textit{model}, which defines as a function $\phi: V(H) \to \mathcal{P}(V(G))$ subject to the following conditions:
  \begin{enumerate}
    \item \textbf{Vertex-model connectivity:} For every node $v \in V(H)$, $\phi(v) \subseteq V(G)$ forms a connected subgraph in $G$.
    \item \textbf{Edge representation:} For every edge $(u, v) \in E(H)$, there exists one edge $(u', v') \in E(G)$ where $u' \in \phi(u)$ and $v' \in \phi(v)$.
    \item \textbf{Vertex-model pairwise disjointness:} $\phi(u)\cap \phi(v) = \emptyset $ for all $u, v \in V(H)$ such that $u\ne v$. This is, every node in the target graph can only appear in the mapping of one source graph node.
  \end{enumerate}
  The set $\phi(v)$ is called the \textit{vertex-model} or \textit{chain} of $v$.
\end{definition}

In the context of embedding Ising models into quantum annealers, the source graph represents the problem graph, while the target graph corresponds to the architecture of the quantum processor. The node set $V(H)$ in the source graph comprises the variables of the model, while the edge set $E(H)$ consists of pairs of variables that exhibit non-zero weighted quadratic interactions. Conversely, the target graph is defined with qubits as nodes and couplers as edges. The minimum minor of the target graph refers to the minimal utilization of qubits.

At the time of writing, the most advanced quantum annealers are D-Wave's \texttt{Advantage\_system} processors. These processors are constructed using a topology graph known as \textit{Pegasus} \cite{boothby2020next}, illustrated in Figure \ref{fig:pegasus}. This graph comprises 5760 nodes, with the majority having a degree of 15. In addition to the established \texttt{Advantage\_system} processors, a prototype of the next-generation models is also available, which is called \texttt{Advantage2\_prototype}. This new device features a higher connectivity topology known as \textit{Zephyr} \cite{boothby_zephyr_nodate}. This new topology has a degree of 20 for the majority of the nodes. However, due to its early stage, this prototype is not as large as the established ones and features approximately 1200 qubits.

\begin{figure}[t]
    \centering
    \includegraphics[width=0.7\linewidth]{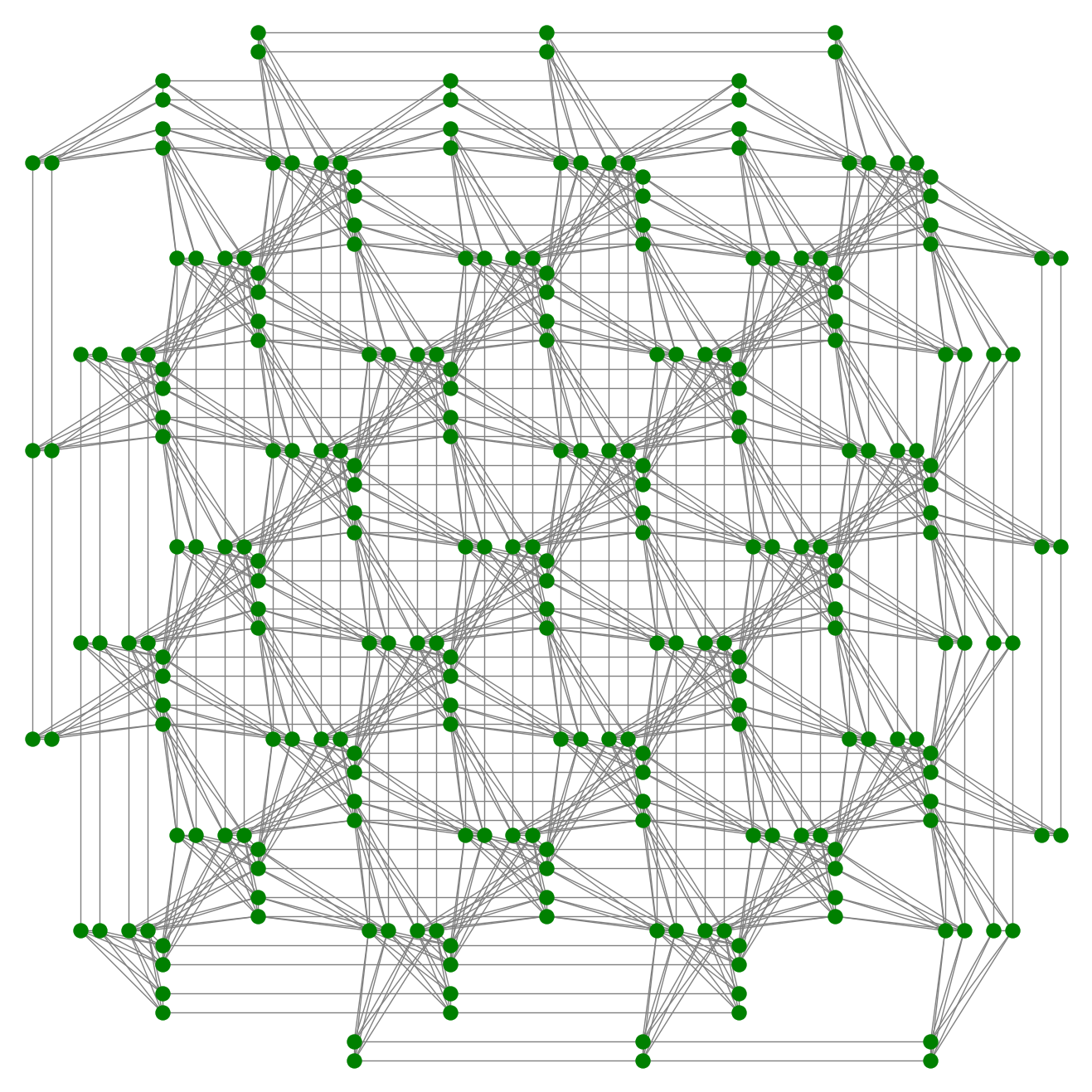}
    \caption{Section of a Pegasus graph. The size of the actual Pegasus employed in the processors follows the same shape but is 16 times larger in area. The graph has been generated with the D-Wave's python library dwave-networkx, which is an extension of Networkx \cite{hagberg2008exploring}.}
    \label{fig:pegasus}
\end{figure}

Although the ideal Pegasus graph is the one shown in Figure \ref{fig:pegasus}, in practice, certain qubits and couplers are excluded from the accessible programmable fabric due to operational malfunctions. The actual topology graph of an \texttt{Advantage\_system} processor, which includes missing qubits or couplers, is referred to as a \textit{broken} Pegasus graph, and the actual number of qubits of these graphs is around 5600. The number of couplers in the \texttt{Advantage\_system} processors is approximately 40\,000, resulting in a sparse graph with a density of
\[\frac{\text{existing couplers}}{\text{number of possible pairs of qubits}}\approx\frac{40\,000}{\displaystyle \binom{5600}{2}} \approx 0.0025.\]
This sparsity significantly limits the capabilities of these annealers. 

\begin{Example}[title={\bfseries Illustrative example}]{}{example}
To illustrate the impact of the sparse density of quantum annealing processors, consider an example using the D-Wave \texttt{Advantage\_system4.1}. Take an instance of the well-known Traveling Salesman Problem (TSP) with $n$ cities. When formulated as an Ising model, it requires \( n^2 \) variables, resulting in a fully connected problem graph \cite{lucas2014ising}. Embedding a complete graph into Pegasus, due to its limited connectivity, necessitates extensive chains, thereby requiring a large number of qubits. Consequently, only fully connected Ising models with fewer than 200 binary variables can be embedded, thereby limiting the TSP instance to fewer than \( \sqrt{200} \approx 14 \) cities.
\end{Example}

To summarize, solving the minor-embedding problem optimally is crucial for the following reasons:
\begin{itemize}
    \item The performance of the end-to-end quantum annealing paradigm is influenced by the quality of the embedding, as will be demonstrated in Section \ref{sec:embedding_quality_affection}.
    \item The size and density of the problems the quantum annealer can address depend on the ability to embed problems using the minimum number of qubits.
\end{itemize}

However, the relevance of the problem is further underscored by its computational hardness. Note that the following hardness result pertains to the decision problem associated with our optimization problem, which involves determining whether a minor-embedding can be achieved using the minimal possible number of qubits.

\begin{proposition}{}{np-hard}
    The minimal minor-embedding problem is NP-hard.
\end{proposition}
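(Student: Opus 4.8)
The plan is to establish NP-hardness by a polynomial-time reduction from the \textsc{Subgraph Isomorphism} problem, which is well known to be NP-complete: given graphs $H$ and $G$, decide whether $H$ is isomorphic to a (not necessarily induced) subgraph of $G$. First I would phrase the decision version of our optimization problem explicitly: given a source graph $H$, a target graph $G$, and an integer $k$, decide whether there exists an embedding $\phi$ (in the sense of the Definition) with $\sum_{v \in V(H)} |\phi(v)| \le k$.

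The key observation driving the reduction is a trivial lower bound on the number of qubits. Since each chain $\phi(v)$ must be non-empty (every variable needs at least one qubit to represent it) and the chains are pairwise disjoint by condition~3, any valid embedding satisfies $\sum_{v} |\phi(v)| \ge |V(H)|$. Equality holds exactly when every chain is a singleton. Hence the minimal possible number of qubits is $|V(H)|$, and it is attainable if and only if $H$ admits an embedding in which each $\phi(v)$ is a single node of $G$.

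The reduction then sets $k = |V(H)|$ and asks whether this minimal value is achieved. I would verify the equivalence in both directions. If all chains are singletons, say $\phi(v) = \{f(v)\}$, then disjointness makes $f$ injective, and condition~2 (edge representation) forces every edge $(u,v) \in E(H)$ to be mapped to an edge $(f(u), f(v)) \in E(G)$; thus $f$ witnesses that $H$ is isomorphic to a subgraph of $G$. Conversely, any subgraph isomorphism $f$ yields a valid singleton embedding $\phi(v) = \{f(v)\}$ using exactly $|V(H)|$ qubits. Since the instance $(H, G, |V(H)|)$ is constructed in time linear in the input size, this is a polynomial-time reduction, and the NP-hardness of the minimal minor-embedding problem follows.

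I expect the only delicate point to be confirming that condition~2 under singleton chains reproduces exactly \emph{non-induced} subgraph isomorphism rather than the induced variant: the definition requires merely the existence of one representing edge per edge of $H$ and imposes no constraint on non-adjacent pairs, so the correct target is the non-induced formulation, which is the NP-complete one. Once this is pinned down, the argument is immediate, and no heavy machinery (such as the Robertson--Seymour minor theory) is needed.
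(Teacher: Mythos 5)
Your proof is correct, but it takes a genuinely different route from the paper's. The paper invokes Lemma~3 of Lobe and Lutz \cite{LOBE2024114369}, which shows that the \emph{feasibility} problem alone---deciding whether any embedding of $H$ into $G$ exists, with no bound on qubit usage---is already NP-complete: taking the source to be a cycle on $|V(G)|$ vertices forces every chain to be a singleton, so an embedding exists iff $G$ has a Hamiltonian cycle; hardness of the minimization version then follows because solving it in particular answers the feasibility question. You instead reduce \textsc{Subgraph Isomorphism} to the budgeted decision version, setting $k = |V(H)|$ so that the \emph{budget} is what forces all chains to be singletons. Both arguments exploit the same structural fact (singleton-chain embeddings are exactly non-induced subgraph isomorphisms), but they locate the hardness differently: the paper's argument shows the hardness resides already in feasibility, independent of any minimization, whereas yours is self-contained (no external lemma needed) and speaks directly to the ``minimal'' aspect of the problem through the budget---indeed the paper's reduction is essentially the special case of yours with $|V(H)| = |V(G)|$, where the budget is automatic. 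One point you should pin down: your lower bound $\sum_{v \in V(H)} |\phi(v)| \ge |V(H)|$ requires chains to be non-empty, which the stated definition only guarantees for vertices having at least one neighbor (via the edge-representation condition); either adopt the standard convention that vertex-models are non-empty, or restrict the reduction to source graphs without isolated vertices, for which \textsc{Subgraph Isomorphism} remains NP-complete (it still contains Hamiltonian cycle and clique as special cases), so the argument goes through either way.
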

\begin{proof}
    As shown in Lemma 3 from \cite{LOBE2024114369}, the problem of just finding an embedding without minimizing the size of the minor (the number of qubits) is NP-complete. This is because for any given target graph, the special case of taking as source a cycle graph with the same number of vertices as the target, corresponds exactly to the Hamiltonian Cycle problem, which is one of Karp's 21 NP-complete problems \cite{Karp1972}. Also, its solution is polynomially verifiable by checking that all three conditions of the definition are fulfilled.

    However, the minor-embedding problem as defined here is NP-hard because it is at least as hard as the NP-complete model, but no polynomial algorithm is known to verify the optimality of a given solution in terms of used number of qubits.
\end{proof}

It is important to note that the previous result is proven for the problem with source and target graphs as inputs. In the quantum annealing paradigm, this is not exactly the case because the target graph is not fully variant; it is restricted to slightly broken versions of specific topologies such as Pegasus, Zephyr, or other topology graphs of different quantum annealers. However, as demonstrated by Lobe and Lutz in \cite{LOBE2024114369}, the variants with missing vertices and edges in these broken topology graphs maintain the problem within the same complexity class.

As a result of the problem's complexity, heuristic methods must be employed, which have inherent disadvantages. These include suboptimal embeddings that introduce unnecessary noise and the inability to embed problems that could theoretically be embedded, further diminishing the quantum annealer’s capabilities. Moreover, finding the actual minor-embedding typically demands significant computational time. The classical solving of the minor-embedding problem can delay the quantum annealing process, potentially negating any speedup advantages.

\section{Experimentation}\label{sec:experimentation}
Given the comprehensive background provided on the minor-embedding problem in quantum annealing, we now present the core findings of this study. This section aims to experimentally address the following closely interrelated research questions (RQs). 

\begin{quote}
    \textit{\textbf{RQ1}: What is the impact of the embedding quality on the performance of the quantum annealer? (Section \ref{sec:embedding_quality_affection})}
\end{quote}

QA's performance is influenced by different factors such as noise in the processors and the nature of the energy landscape of the Ising model. In this context, we hypothesize that the error in quantum annealing \textcolor{black}{depends critically on the quality of the embedding.}

% is exponentially correlated with the number of qubits used per variable in the embedding.

Noise is naturally unpredictable, and the energy landscape can only be assessed once the problem is analytically solved, making preemptive adjustments unfeasible. However, the number of qubits in the embedding is known prior to sending the problem to the processor. Thus, although not a holistic performance predictor, our experiments offer insights into the annealer’s problem-solving efficacy based on the embedded problem’s size.

\begin{quote}
    \textit{\textbf{RQ2}: How good is Minorminer as a minor-embedding algorithm for quantum annealing? (Section \ref{sec:MMPerformance})}
\end{quote}

Given that Minorminer is the standard method for embedding problems into D-Wave’s annealers, evaluating its performance is crucial. This analysis provides insights about the expected quality of embeddings of a given problem. When combined with the previous point, it helps predict the anticipated performance of the end-to-end quantum annealing solution. \textcolor{black}{Note that the experiments were conducted on a specific class of random graphs which, although representative of a wide range of problems, do not encompass all possible problem structures.}

Section \ref{sec:setup} will provide the detailed configuration of the experiments to ensure the reproducibility of the results.

\subsection{Experimental setup}\label{sec:setup}
Three experiments were conducted in this study: two addressing RQ1 and one addressing RQ2.

\paragraph{Instance generation}
The problems to be solved along the whole experimentation are Ising models which have been generated following a two-step procedure: 
\begin{itemize}
    \item Initially, the problem graph is randomly constructed using the Erdös-Rényi (ER) algorithm \cite{erdds1959random}, tailored to specific problem sizes and densities. The algorithm was selected because its uniformly random nature makes it a generally applicable and relatively unbiased method for approximating graphs with a given number of nodes and edge probability (density).
    \item Subsequently, random numerical values are assigned to the variable biases at the nodes and the interaction strengths at the edges. These values are uniformly selected from the range $[-1,+1]\subset \mathbb{R}$ to prevent degenerate instances. 
\end{itemize}
%Note that for the RQ2 experiments, which focus exclusively on embedding instances, generating the graph is sufficient, leaving aside the subsequent resolution of the problem.
The size, density, and number of problems generated and solved varied across the experiments, and is detailed in Table \ref{table:setup}.

\paragraph{Infrastructure}

The embeddings for all experiments were conducted on Katea Computing Platform, the high-performance computing infrastructure at Tecnalia. The system features 1056 CPU cores operating at 3.8 GHz, providing approximately 79.8 teraflops of computational power, complemented by 9.5 TB of RAM and 1 petabyte of storage. Code was run with Python 3.9.18. Specifically, the experiments were executed using 64 CPU cores allocated from the CPU partition via the SLURM workload manager.

All Ising models were solved using the D-Wave \texttt{Advantage\_system4.1} processor, which, at the time of writing, is the largest model available (the one with least amount of broken qubits), featuring 5627 qubits arranged in a broken Pegasus topology. The processor has been operated with default parameters, and the number of reads per annealing process has been set to 1000 reads.

Minorminer has been executed differently for each of the two research questions:
\begin{itemize}
    \item For RQ1, the chain-length-patience parameter—which determines the number of consecutive iterations without improvement in qubit usage before termination (see \cite{cai2014practicalheuristicfindinggraph})—was configured differently across executions (see Table~\ref{table:setup}).  This setup has been chosen to obtain a broader and more varied sample, allowing for the observation of the relationship between QA performance and embedding qubit usage per variable.
    \item For RQ2, the chain-length-patience parameter has been set to its default value (10), as the goal has been to assess the algorithm's capacity.
\end{itemize}

Regarding the chain strength parameter, in RQ1.1 it has been calculated using D-Wave's UTC function, with the prefactor set to the default value 1.414. While empirical tuning of the prefactor is recommended to optimize performance, the default setting was employed in this case to isolate and evaluate the impact of the embedding on solution quality. This choice also reflects the typical usage patterns of non-expert users, who are likely to rely on default configurations. %Furthermore, the scale of the experiment and limited access to D-Wave resources necessitated a streamlined approach. 
In contrast, for RQ1.2, each problem and embedding was executed using five different prefactor values ranging from 0.5 to 2.0, consistent with D-Wave’s recommended tuning range, and their results have been studied.

The Minorminer's chain-length-patience parameter and Uniform Torque Compensation's prefactor used in the experiments are also detailed in Table \ref{table:setup}.

\begin{table*}[t]
\centering
\renewcommand{\arraystretch}{1.5}
\begin{threeparttable}
\begin{tabular}{lcccccc}
% \hline
\textbf{Experiment} & \textbf{Sizes} & \textbf{Densities} & \textbf{Problems\tnote{1}} & \textbf{Embeddings\tnote{2}} & \textbf{CLP\tnote{3}} & \textbf{UTC pref.\tnote{4}} \\
\hline
RQ1.1 & $\{25, 35, ..., 175\}$ & $\{0.05, 0.1, ...,1\}$ & 5 & 10 & \{0,1,...,10\} & 1.414 \\
RQ1.2 & 150 & 0.5 & 5 & 100 & \{0,1,...,10\} & \{0.5, 0.75, 1, 1.414, 2\} \\
RQ2 & $\{10, 15, ..., 300\}$ & $\{0.05, 0.1, ...,1\}$ & 1 & 64 & 10 & \_\\
% \hline
\end{tabular}
\begin{tablenotes}
    \footnotesize
    \item[1] Number of problems per density and size.
    \item[2] Number of embeddings per problem.
    \item[3] CLP stands for Minorminer's chain-length-patience parameter.
    \item[4] Chain Strength setting function Uniform Torque Compensation prefactor.
\end{tablenotes}
\end{threeparttable}
\caption{Experimental setup for the three experiments. %Problems and Embeddings refer to the number of generated problems per size and density, and the number of generated embeddings per problem, respectively. MM CLP refers to Minorminer's chain-length-patience parameter (see \textbf{Infrastructure}).
}
\label{table:setup}
\end{table*}

\paragraph{Evaluation procedure}
As D-Wave quantum annealers work as sampling machines, it is more pertinent to consider the median energy of the sample rather than the solution with the minimum energy to accurately assess the machine’s performance. While the best-found solution is obviously more interesting when solving a problem, the median is arguably more representative of the entire sample. Therefore, the results presented in this study are based on the median value of the sample.

Furthermore, since the annealer tackles the embedded problem, it is more pertinent to evaluate the solutions to the embedded problem (after step $3$ in Figure \ref{fig:diagrama_flujo}) rather than the postprocessed ones (after step $4$ in Figure \ref{fig:diagrama_flujo}), which can be altered by classical methods. Nevertheless, as the energies of both the solutions to the embedded problem and the postprocessed solutions fall within the same range, they can be shown side by side. Therefore, in Section \ref{sec:embedding_quality_affection} both sets of solutions are studied.

\paragraph{Metrics}
Regarding the metrics employed to asses the quality of the results, on the one hand, the solutions provided by the QPU are measured using the relative error. This metric quantifies the ratio of the deviation of the energy of the solution with respect to the energy of a reference solution. The reference solutions used along the experimentation have been taken as the best from the two provided by the D-Wave implemented versions of the following classical algorithms: MST2 multistart tabu search algorithm \cite{palubeckis2004multistart} and simulated annealing sampler \cite{kirkpatrick1983optimization}. Thus, the relative error is calculated as follows:
\[
e_{rel} = \left|\frac{energy_{ref}-energy_{QA}}{energy_{ref}} \right|
\]
where $energy_{ref}$ is the energy of the solution found by the reference solver and $energy_{QA}$ is the energy of a solution given by the annealer, i.e., the energy of an outcome after step $3$ in Figure \ref{fig:diagrama_flujo}. This approach is justified despite the reference solutions being obtained via heuristic methods, as these solvers have demonstrated strong performance on problem types similar to those addressed in this work \cite{mcgeoch2013experimental, gilbert2024benchmarking, tasseff2024emerging}. Moreover, in the experiments conducted here, no solution produced by quantum annealing outperformed those generated by the reference solvers.

On the other hand, the quality of an embedding can be evaluated using two different but related metrics. The first one is the total number of qubits needed in the embedding, i.e.,
\[
    n_{\mathrm{qubits}} = \sum\limits_{v_i\in V(H)}|\phi(v_i)|,
\]
and the second one is the \textit{average chain length (ACL)}:
\[
ACL = \frac{n_{\mathrm{qubits}}}{size(model)}
\]
where $size(model)$ is the number of variables in the original Ising model. As the size of the Ising models varies throughout the experiments, we use the latter metric because it is normalized by the number of qubits, making it more representative for comparing embeddings of Ising models with different sizes.
 
The comparison between QA performance and embedding quality is therefore quantified by the median relative error of the embedded problem solution sample, relative to the ACL of the embedding used to solve the problem.

In relation to RQ1.2, where the chain strength was varied across executions, the chain break fraction was also examined. This metric, defined as the ratio of broken chains in a solution to the total number of variables in the problem instance, serves as an indicator of the effectiveness of the chosen chain strength in maintaining chain integrity throughout the solutions.

All experiments described in this study are fully reproducible and openly available\footnote{https://github.com/aitorgtt/Addressing-the-MEP-in-QA-and-evaluating-SOTA-algorithm}.

\subsection{RQ1: What is the impact of the embedding quality on the performance of the quantum annealer}
\label{sec:embedding_quality_affection}

In general, problems with larger solution spaces are more challenging to solve than those with smaller ones. As previously discussed, optimal embeddings minimize the number of qubits used, while poor embeddings result in larger embedded problems. Even when the embedded problem is correctly modeled, an increase in its size expands the solution space within which the quantum annealer operates, thereby reducing the probability of obtaining high-quality solutions. The growth of the solution space is analytically measured in the following proposition:

\begin{proposition}{}{sol_space}
    Let \( n\in\mathbb{N} \) denote the number of variables in an Ising model, and let the model be embedded onto a hardware topology graph with an average chain length denoted by ACL. Define the solution spaces of the original and embedded Ising models as \( S_{original} \) and \( S_{embedded} \), respectively. Then, the proportion of solutions in \( S_{embedded} \) that corresponds to solutions without any broken chains—i.e., valid mappings to \( S_{original} \)—is given by:
    \[
    \frac{|S_{original}|}{|S_{embedded}|}=2^{n (1 - \mathrm{ACL})} \to 0 .
    \]
    exponentially fast as ACL increases. 
\end{proposition}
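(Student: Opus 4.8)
The plan is to reduce the claim to an elementary counting argument over binary configurations, exploiting the two facts already established in the excerpt: the original Ising variables take values in $\{+1,-1\}$, and the embedded model is itself an Ising model whose spins are the physical qubits indexed by $\bigcup_{v\in V(H)}\phi(v)$.

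First I would record the two solution-space sizes. Since each of the $n$ variables is binary, $|S_{original}| = 2^n$. The embedded model assigns an independent spin to each used qubit, so $|S_{embedded}| = 2^{n_{\mathrm{qubits}}}$, where $n_{\mathrm{qubits}} = \sum_{v_i\in V(H)}|\phi(v_i)|$. Invoking the definition $\mathrm{ACL} = n_{\mathrm{qubits}}/size(model) = n_{\mathrm{qubits}}/n$, I substitute $n_{\mathrm{qubits}} = n\cdot\mathrm{ACL}$ to get $|S_{embedded}| = 2^{n\,\mathrm{ACL}}$.

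The key step is to count the embedded configurations containing no broken chain. By the vertex-model pairwise-disjointness condition of the Definition, the chains $\{\phi(v)\}_{v\in V(H)}$ partition the used qubits into $n$ disjoint blocks. A configuration has no broken chain precisely when every block is constant, and each constant block admits exactly two values ($+1$ or $-1$). Choosing these values independently across the $n$ chains therefore yields a bijection between unbroken embedded configurations and assignments of $\pm1$ to the original variables, so there are exactly $2^n = |S_{original}|$ of them. Taking the ratio gives
\[
\frac{|S_{original}|}{|S_{embedded}|} = \frac{2^n}{2^{n\,\mathrm{ACL}}} = 2^{n(1-\mathrm{ACL})}.
\]
Finally, since every variable requires at least one qubit we have $|\phi(v)|\ge 1$, hence $n_{\mathrm{qubits}}\ge n$ and $\mathrm{ACL}\ge 1$; for any non-trivial embedding with $\mathrm{ACL}>1$ the exponent $n(1-\mathrm{ACL})$ is negative and decreases linearly as $\mathrm{ACL}$ grows, so the ratio decays to $0$ exponentially, as claimed.

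I do not anticipate a serious obstacle, as the argument is essentially combinatorial. The only point that genuinely requires care is the bijection in the third step: one must verify that the disjointness condition really makes the chains a partition of the used qubits (so the per-chain choices are independent and multiply to $2^n$), and that the event \emph{no broken chain} is \emph{exactly} the event \emph{every chain constant}, consistent with the majority-vote postprocessing described earlier. Everything else is a direct substitution into the definition of $\mathrm{ACL}$.
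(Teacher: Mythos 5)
Your proposal is correct: the counting argument ($|S_{original}|=2^{n}$, $|S_{embedded}|=2^{n_{\mathrm{qubits}}}=2^{n\cdot\mathrm{ACL}}$, and the bijection between chain-consistent embedded configurations and original spin assignments via the disjointness of the vertex-models) is exactly the justification the statement rests on. The paper itself states this proposition without giving an explicit proof, so your argument simply makes explicit what the paper leaves implicit, including the point that $\mathrm{ACL}\ge 1$ guarantees the ratio decays exponentially once $\mathrm{ACL}>1$; there is no divergence in approach to report.
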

% \begin{proof}
%     For any binary unconstrained model such as Ising model or QUBO, when embedded into a certain topology graph, the size of the solution space of its embedded version $S$ is $|S| = 2^{n_{\mathrm{qubits}}} = 2^{ n \cdot \mathrm{ACL}}$.
% \end{proof}

% Although embedded Ising models possess an exponentially larger solution space, the proportion of configurations corresponding to valid solutions—i.e., those without broken chains—remains constant. 

% The fraction of valid solutions decreases exponentially with the size of the embedding. 

% \textcolor{blue}{Although the annealer may not sample uniformly across the entire solution space, as ACL increases the proportion of valid solutions decreases, which raises the likelihood of selecting invalid ones and results in poorer solution quality. The reason for this is that even when the chain strength is optimally selected, solutions to problems with higher ACL will invariably be of lower quality compared to those with lower ACL, as is observed and discussed later in Figure \ref{fig:RQ1.2}.}

\textcolor{black}{In addition to the fact that the error of quantum annealers increases linearly with the number of qubits, as reported in \cite{raymond2016global, albash2017temperature}, in the context of embedded Ising models the error grows even more rapidly. This is due to the contraction of the space of valid solutions and to the noise introduced by the chain strengths in the final objective function, even when this parameter is optimally tuned.}

% Striking the right balance is particularly challenging, as excessively weak or overly strong chain strengths can both degrade solution quality \cite{choi2008minor, raymond2020improving}, increasing the likelihood of the annealer sampling suboptimal configurations. \textcolor{blue}{If the chain strength parameter is not set high enough, the energies of broken and valid solutions overlap, significantly increasing the likelihood that the annealer samples broken solutions.}

% This issue is further compounded by the sampling behavior of D-Wave’s quantum annealers, which follow a problem-dependent energy-based distribution \cite{dwave2025}. This distribution is susceptible to perturbations from various noise sources \cite{zaborniak2021benchmarking, rajak2023quantum, pelofske2023noise}, further reducing the probability of obtaining high-quality solutions. As the average chain length of the embedding increases, these effects become more pronounced, leading to a higher incidence of suboptimal samples. Postprocessing techniques—typically based on simple heuristics such as majority voting—are insufficient to fully mitigate this degradation, resulting in significantly poorer final solutions.

To investigate the relationship between embedding quality and quantum annealing performance, two experiments have been conducted.

\paragraph{RQ1.1: Embedding Impact - General Case}
The first experiment measures the relationship between embedding quality and QA performance for a set of 400 Ising models with varying sizes and densities across the entire embeddable range (detailed in Section \ref{sec:MMPerformance}). The aim is to assess this relationship across different scenarios.
% and empirically test its exponential nature over a wide range of problems. 
The results are represented in Figure~\ref{fig:exp_regresion}.

% \begin{figure}[t]
%     \centering
%     \includegraphics[scale = 0.62]{fotos/RQ1.1_with_minima.png}
%     \caption{The blue dots represent the median relative errors over the ACL of the embeddings used for each solution sample. These samples are derived from solving randomly generated Ising models as described in Table \ref{table:setup}. The red line illustrates an exponential ACL-relative error regression model. The silver dots represent the median relative errors of the postprocessed solutions. Pink and grey crosses represent the minimal relative error of the sampleset per ACL, of the solutions to the embedded Ising Model and of the postprocessed solutions, respectivelly.}
%     \label{fig:exp_regresion}
% \end{figure} 

\begin{figure}[t]
    \centering
    \includegraphics[scale = 0.6]{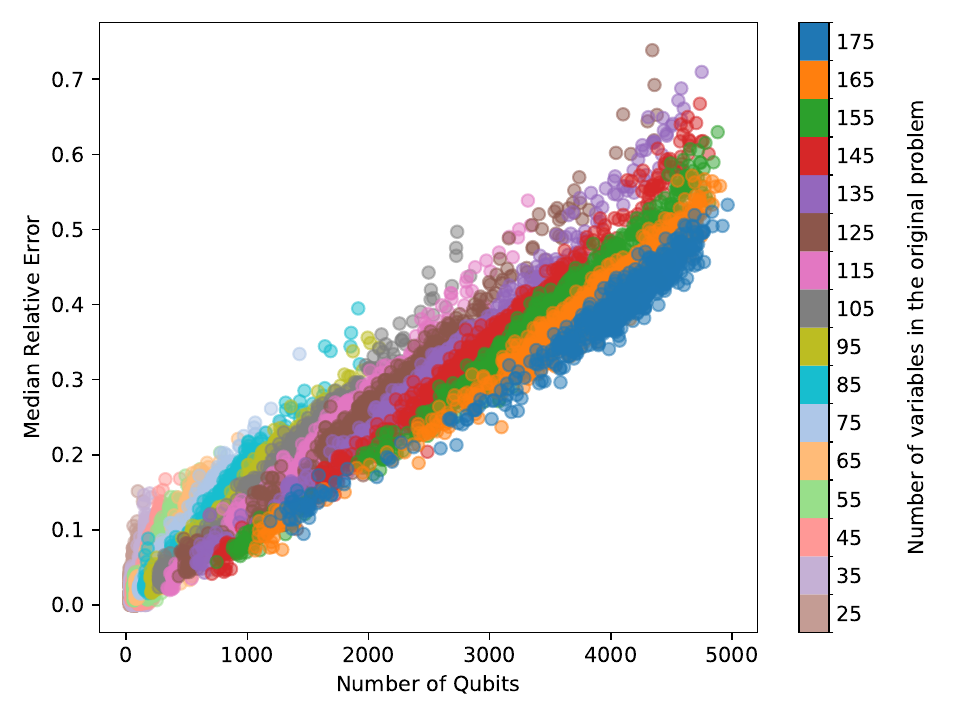}
    \caption{\textcolor{black}{The colored dots represent the median relative errors over the Number of Qubits of the embedded Ising Models for each solution sample. These samples are derived from solving randomly generated Ising models as described in Table \ref{table:setup}.}} %The silver dots represent the median relative errors of the postprocessed solutions. Pink and grey crosses represent the minimal relative error of the sampleset per ACL, of the solutions to the embedded Ising Model and of the postprocessed solutions, respectivelly.}
    \label{fig:exp_regresion}
\end{figure} 

% A cursory examination of the blue dots in the figure reveals a stringent correlation between the two variables, depicted by a consistent curve. To numerically analyze the nature of this curve, an exponential regression model has been applied to the dataset obtained from the experiment, which is illustrated by the red line. 
% The fitting has been executed using the least squares method, achieving an r-squared value of 0.981.
% This high value underscores the strength of the exponential regression and reinforces the pivotal role of optimal minor-embedding in unlocking the full potential of quantum annealing. Notably, among the examined parameters, the ACL exhibits the strongest correlation with the median relative error—surpassing both instance size and density—highlighting ACL as a key determinant of embedding quality and solution accuracy.

\begin{figure*}[t]
    \centering
    \includegraphics[width = 1\linewidth]{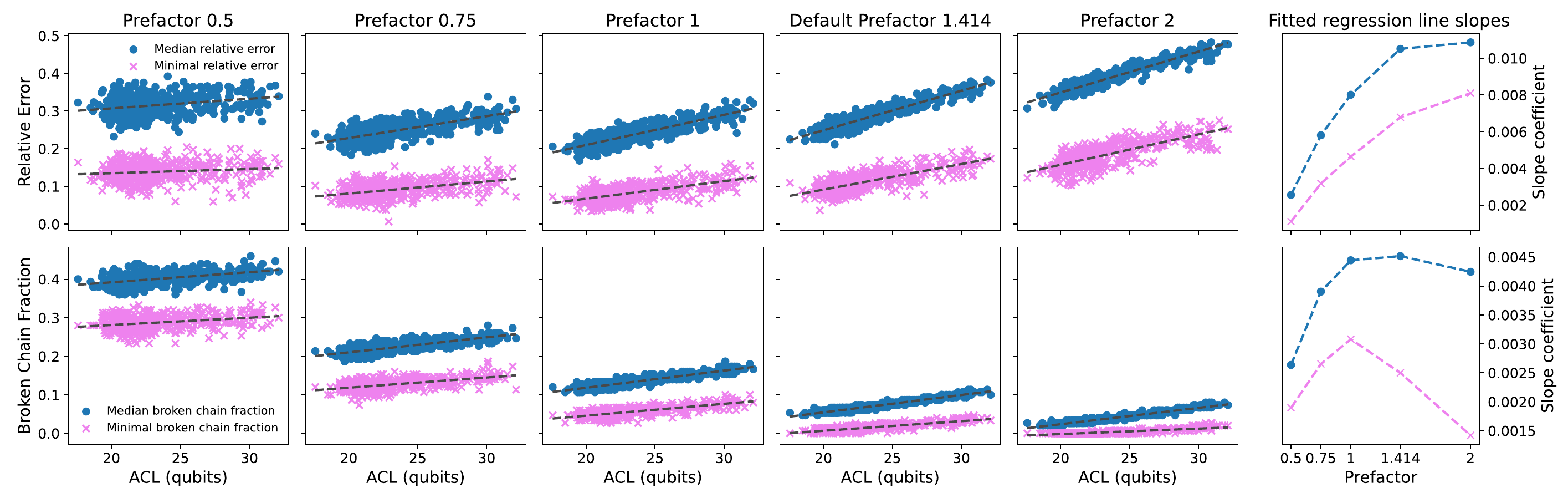}
    \caption{\textcolor{black}{Top row, first five panels: Median and minimum relative errors of the post‑processed solutions obtained from 100 samples, with fitted linear regressions shown as dashed lines. Each point corresponds to a complete sample generated using a distinct embedding, for five randomly selected problem instances (see Table \ref{table:setup}). Bottom row, first five panels: Median and minimum broken‑chain fractions plotted as a function of the ACL of the embeddings, with fitted linear regressions. Last column: Slopes of the linear regression models fitted for each chain‑strength prefactor. For every problem instance and embedding, five embedded Ising models were produced by varying the chain strength, controlled by the prefactor in the UTC function.}}
    \label{fig:RQ1.2}
\end{figure*}

\textcolor{black}{The most immediate observation from the figure is the expected linear relationship between the total number of qubits and the resulting error, consistent with previous findings in the literature. However, an additional pattern becomes evident when the color grouping is examined: for a fixed total qubit count, instances derived from smaller original problems exhibit higher error. In other words, larger problem instances that require comparatively simpler embeddings tend to yield more accurate solutions than smaller problems whose embeddings demand a higher qubit overhead.}

\textcolor{black}{This behavior is reasonable. Embeddings that use more qubits produce a smaller proportion of valid configurations, and they impose more stringent constraints on the selection of chain strengths. As will be shown in the following section, it is not feasible in practice to choose chain strengths that fully prevent chain breaks and still retain the best possible solution quality. Consequently, the error increases inevitably, and this highlights the critical importance of obtaining high‑quality embeddings.}

% The silver dots represent the median relative errors of the postprocessed solutions, i.e., the solutions with broken chains (when present) fixed by majority vote, as explained in Section \ref{sec:process}. While the errors for the postprocessed solutions exhibit an increase, this increase is less pronounced compared to the solutions returned by the annealer. Nonetheless, there is a clear positive correlation between the ACL and the median relative error of postprocessed solutions. It is evident that a larger search space leads to poorer solutions for the embedded Ising models, and consequently, to inferior postprocessed solutions for the Ising model being solved.

% The minimal (best found) solutions per sample also increase and correlate with ACL. However, these results do not exhibit as strong a correlation as the median of the sample. This discrepancy arises because the best found solutions are generally outliers within the sample, where the random component has significant influence. Consequently, although the best found solutions are of primary relevance to users, we chose to analyze the medians of the samples, as they are statistically more significant than the minimums.

\paragraph{RQ1.2: Embedding Impact - Stressed} 
The second experiment analyzes the relationship between QA performance and embedding quality under fixed problem density and size conditions near the embeddability limits (as discussed in Section~\ref{sec:MMPerformance}). This setting is intentionally chosen to represent non-trivial, challenging instances and to stress the embedding process. \textcolor{black}{However test were also made on small and medium instances, and these results can be found in \ref{app:small_medium}.} Since the problem density and size remain constant, variations in QA performance are primarily attributed to differences in the quality of embeddings—captured through the average chain length. 

To assess whether the hypothesized relationship between embedding quality and QA performance holds independently of other factors—and under optimal parameterization of the embedded Ising models—this experiment varies the chain strength across five distinct values for each embedding by adjusting the prefactor in the chain‑strength calculation of the UTC function. This procedure yields five different embedded Ising models per problem instance. The objective is to characterize how performance correlates with ACL across these different chain strengths and to determine whether this relationship persists at the optimal chain strength for each embedding. As a result of this setup, direct comparison of raw energies becomes less meaningful due to differing penalty scales; therefore, we focus on postprocessed solutions—the ones users ultimately receive—and use blue and violet to represent them in Figure~\ref{fig:RQ1.2}. %This contrasts with the first experiment (RQ1.1, Figure~\ref{fig:exp_regresion}), where chain strength was fixed and raw embedded solutions were more directly comparable.

\textcolor{black}{To examine the relationships between the observed variables in greater depth—and given the clear correlations present in the data—linear regression models have been fitted and the slope of each line for every dataset has been analyzed. This approach makes it possible to identify which chain strengths lead to higher or lower relative errors and broken‑chain fractions, as well as to understand how the relationships between these variables evolve as a function of the chain strength.}

As shown in Figure \ref{fig:RQ1.2}, selecting an appropriate chain strength requires a trade-off, and as proven by studies such as \cite{dwave2025, hamerly2019experimental} is problem-dependent. That being said, and although the study of chain strength parameterization is not the main objective of this work, we briefly analyze the results obtained for these problem instances, as they may offer some relevant insights:

\begin{itemize}
    \item The highest tested strength (prefactor 2) minimizes broken chains but degrades solution quality due to excessive penalization. In contrast, the lowest strength (prefactor 0.5) yields lower-energy solutions but distorts the energy landscape, increasing reliance on postprocessing and reducing sampling effectiveness. In such cases, the sampling behavior approaches that of random sampling, which fails to consistently produce high-quality solutions.
    
    \item \textcolor{black}{The optimal chain strength is 1 for nearly all problem–embedding combinations, with only a few cases—generally corresponding to embeddings with higher ACL—showing an optimum at 0.75. This suggests that suboptimal embeddings require a slightly weaker chain‑strength prefactor, accepting a higher number of broken chains in exchange for improved performance. However, it is important to note that these solutions still yield lower overall quality than those obtained from embeddings with lower ACL when using a prefactor of 1.}
    
    % The optimal chain strength—corresponding to a prefactor of 1—is consistently effective across nearly all five problem instances and embeddings, regardless of their ACL. Only a few exceptions were observed, where a prefactor of 0.75 yielded slightly better results.
    
    \item There is a positive correlation between ACL and the fraction of broken chains. As expected from Proposition 2, higher qubit needed embeddings increase the likelihood of encountering broken chains in the solution space, thereby raising the probability that the sampler returns such solutions.

    \item \textcolor{black}{The slopes of the fitted linear models for both the median and minimum relative errors increase with chain strength. At low strengths, sampling is almost random, so the slopes remain small and errors barely depend on ACL. As chain strength grows, sampling improves and poor ACL has a stronger effect, causing the slopes to rise sharply. At the highest strengths, this growth slows down because the impact of ACL begins to saturate. In contrast, the slopes for the broken‑chain fractions also rise at low strengths for the same randomness‑related reason, but they decrease once the prefactor—and therefore the chain strength—becomes larger. This decrease is more pronounced for the minimum values, since stronger chains break less often and thus the broken‑chain fraction grows more slowly with increasing ACL.}
\end{itemize}

All these insights suggest that, when the chain strength is properly tuned, the correlation observed in Figure~\ref{fig:exp_regresion} holds: there is a clear and well-defined upward trend in the relative error (most notably in the median, but also in the minimum values of the solution samples) as the ACL increases. This trend likely stems from the decreasing probability of sampling valid solutions as ACL grows, which inherently leads to a degradation in solution quality. Notably, this correlation is present across all tested chain strengths, with the exception of the weakest settings, where the increased influence of randomness obscures the trend.

\textcolor{black}{}

\begin{Summary}[title={\bfseries Summary RQ1}]{}{firstsummary}

Increasing the Average Chain Length (ACL) in embeddings causes an exponential drop in the fraction of valid solutions—i.e., those without broken chains—within the embedded Ising model's solution space. This impairs sample quality, even with optimal chain strength.
% \textcolor{blue}{Empirical results indicate that, beyond the established correlation between the number of variables in the original problem and the qubit count required for execution, the error also increases—when either of these quantities is held fixed—as the volume of valid solutions decreases. This reduction in valid solution space is directly and exclusively determined by the embedding’s average chain length (ACL) and can only be mitigated through improvements to the embedding itself. Consequently, embedding quality plays a decisive role in governing the overall performance of quantum annealing.}
Empirical results show a strong negative correlation between ACL and the performance of the quantum annealing process, \textcolor{black}{when either the size of the Ising Model or the number of qubits are held fixed}. These findings highlight the essential role of optimal minor-embedding, as embedding quality directly governs the overall effectiveness of quantum annealing.
\end{Summary}

\subsection{RQ2: How good is Minorminer as a minor-embedding algorithm for quantum annealing?}
\label{sec:MMPerformance}

To evaluate the performance and capabilities of Minorminer as a minor-embedding algorithm \textcolor{black}{for ER graphs into a broken Pegasus}, this section will examine four key aspects: 
\begin{itemize}
    \item the ability to obtain valid embeddings,
    \item the quality of these embeddings,
    \item the dispersion of the quality of embeddings,
    \item and the execution times.
\end{itemize} 

All four aspects have been studied based on the results of an experiment that involves repeatedly executing Minorminer to embed different problem graphs as specified in Table \ref{table:setup}. CE has also been used to solve the same problems, providing a baseline for comparison with Minorminer. CE identifies embeddings for fully connected graphs of a specified size within a target graph. These embeddings can then be applied to graphs of any connectivity with the same number of nodes, making CE useful as a worst-case scenario in terms of qubit usage (ACL). This algorithm is preprocessed once for the given target graph and subsequently generates embeddings in a deterministic way for a given size of complete graphs in polynomial time, which is significantly faster than Minorminer. The size of the maximum embeddable complete graph is contingent upon the topology graph. For the D-Wave \texttt{Advantage\_system4.1} processor's broken Pegasus, this value is 177 nodes. The ACL of these embeddings is always the same for a given complete graph size, which increases linearly with the number of nodes in the instances.%When embedding complete graphs into broken Pegasus, the ACL of the embeddings is described by the following equation:
% \begin{equation}
%     ACL_{CE} \approx 0.089*size + 1.11
% \end{equation}
% where \textit{size} represents the number of nodes in the complete graph to be embedded.

\paragraph{Embeddableness boundary} The primary objective is to delineate the boundary between instance graphs that can be embedded into D-Wave’s processor topology graphs and those that cannot. Although this boundary is inherently a property of the topology graph, it is also affected by the limitations of the heuristic algorithm employed for the embeddings. Due the intractable nature of the problem, it is not feasible to determine this boundary precisely. Consequently, in practical terms, this boundary can be considered a characteristic of the heuristic method employed. As the results are inherently variable, the boundary is measured by the probability of obtaining a valid embedding.

\begin{figure}[t]
    \centering
    \includegraphics[scale = 0.4]{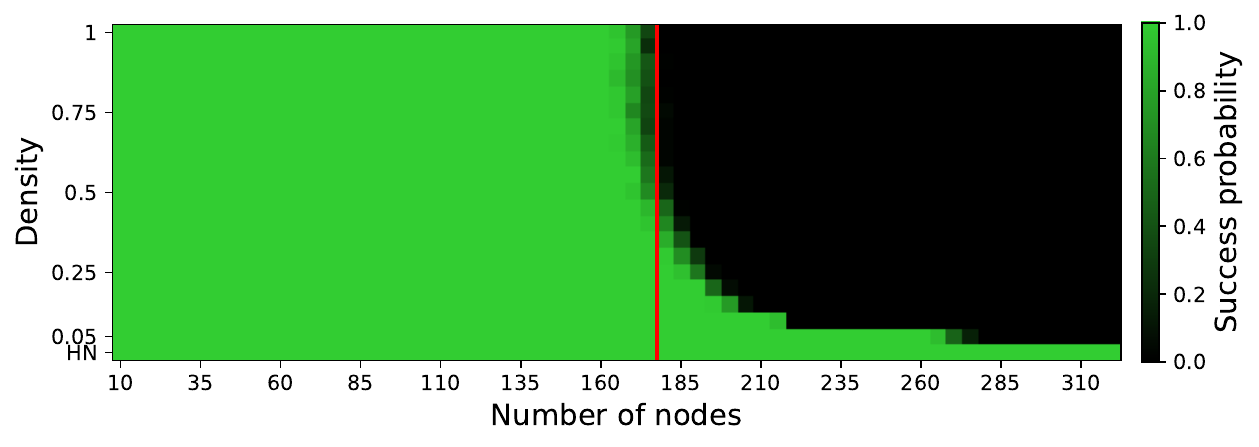}
    \caption{Probability of success of Minorminer finding a valid embedding for ER graphs into D-Wave \texttt{Advantage\_system4.1}'s broken Pegasus, as a function of density and size of problem graph. The red line represents the embeddable limit for CE. HN corresponds to Hardware Native instances, taken as the lowest significant density value.}
    \label{fig:er_frontier}    
\end{figure}

Figure \ref{fig:er_frontier} illustrates the probability of Minorminer of finding a valid embedding as a function of the problem graph's density and size. This probability has been determined by counting the number of valid embeddings found across 64 executions per graph. The lowest density value of the figure corresponds to hardware native graphs, which are trivially embeddable for sizes up to 5600. It is important to note here that Minorminer itself does not verify whether a given problem instance is natively embeddable on the quantum hardware. However, the \texttt{AutoEmbeddingComposite} class in the D-Wave Ocean SDK\footnote{\url{https://dwave-systemdocs.readthedocs.io/en/latest/reference/generated/dwave.system.composites.AutoEmbeddingComposite.sample.html}}, which automates the quantum annealing process explained in Section \ref{sec:process}, initially attempts to submit the problem directly to the quantum annealer under the assumption that it may be hardware native. If this direct submission fails, the composite then invokes Minorminer to compute a suitable embedding.

The results highlight the significant constraints in solving non-sparse instances using D-Wave quantum annealers. The insights drawn from the example in Section \ref{sec:MEProblem} are confirmed: the combination of the hardware’s limited connectivity, the complexity of the minor-embedding problem, and the heuristic nature of Minorminer severely restricts the size of non-sparse embeddable instances. The following are the key observations to be made from Figure \ref{fig:er_frontier}:
\begin{itemize}
    \item Despite the hardware containing over 5600 qubits, the size of ER graphs with a density of 0.05 is limited to approximately 275. When the density exceeds 0.5, this value decreases to 175. 
    \item Also, note that the boundary between embeddable and non-embeddable graphs is not a simple linear function of the problem's size or density. The size of the largest embeddable instances decreases rapidly at lower density values, but this decline becomes less pronounced as the density increases. In fact, the curve becomes almost vertical, suggesting that the maximum embeddable instances for fully connected graphs and those with a 0.5 density are of the same size, even though the latter possesses only half the interactions of the former. 

    A possible explanation to this phenomenon is attributed to the constraints imposed by embedding graphs that are significantly denser than the topology graph. For density values exceeding 0.5 and graph sizes greater than 160 nodes, the average degree of the Ising models surpasses 80 interactions per variable, which is more than five times the degree of the qubits in the hardware. Consequently, the algorithm addresses this by constructing qubit chains that are so extensive that it is probable that every pair of variable chains intersects at some point within the hardware graph. As a result, beyond a certain density value, it becomes equally challenging to embed problems of any density. This phenomenon is examined in greater detail by analyzing the ACL of the valid embeddings, as illustrated in Figure \ref{fig:imagenes}.
    
    \item The last observation pertains to the comparison between Minorminer's boundary and that of CE, represented by the red line. For sizes approaching 175 and densities greater than 0.5, Minorminer does not consistently find valid embeddings, unlike CE. This outcome is expected for complete graphs or density values near 1, where CE is typically recommended due to its deterministic nature. However, Minorminer remains inconsistent for lower density values, making CE a more reliable candidate for densities down to 0.5. When the density falls below this threshold, Minorminer begins to outperform CE in finding valid embeddings, with its performance improving as the instances become sparser, as anticipated.

\end{itemize}

\paragraph{Embedding quality}
As have been seen in RQ1, the quality in ACL of embeddings has a critical influence in the performance of quantum annealing. Also, the ability to embed larger and denser instances is heavily dependent on constructing embeddings with the lowest possible ACL values. In Figure \ref{fig:imagenes}, we examine the ACL of Minorminer-constructed embeddings across various ER graphs of different sizes and densities. This analysis aims to explore the dependence of ACL on these two variables, thereby enhancing our understanding of the embeddableness boundary curve. The following are the main observations from this figure:

\begin{figure*}[h!]
    \centering
    \includegraphics[scale = 0.4]{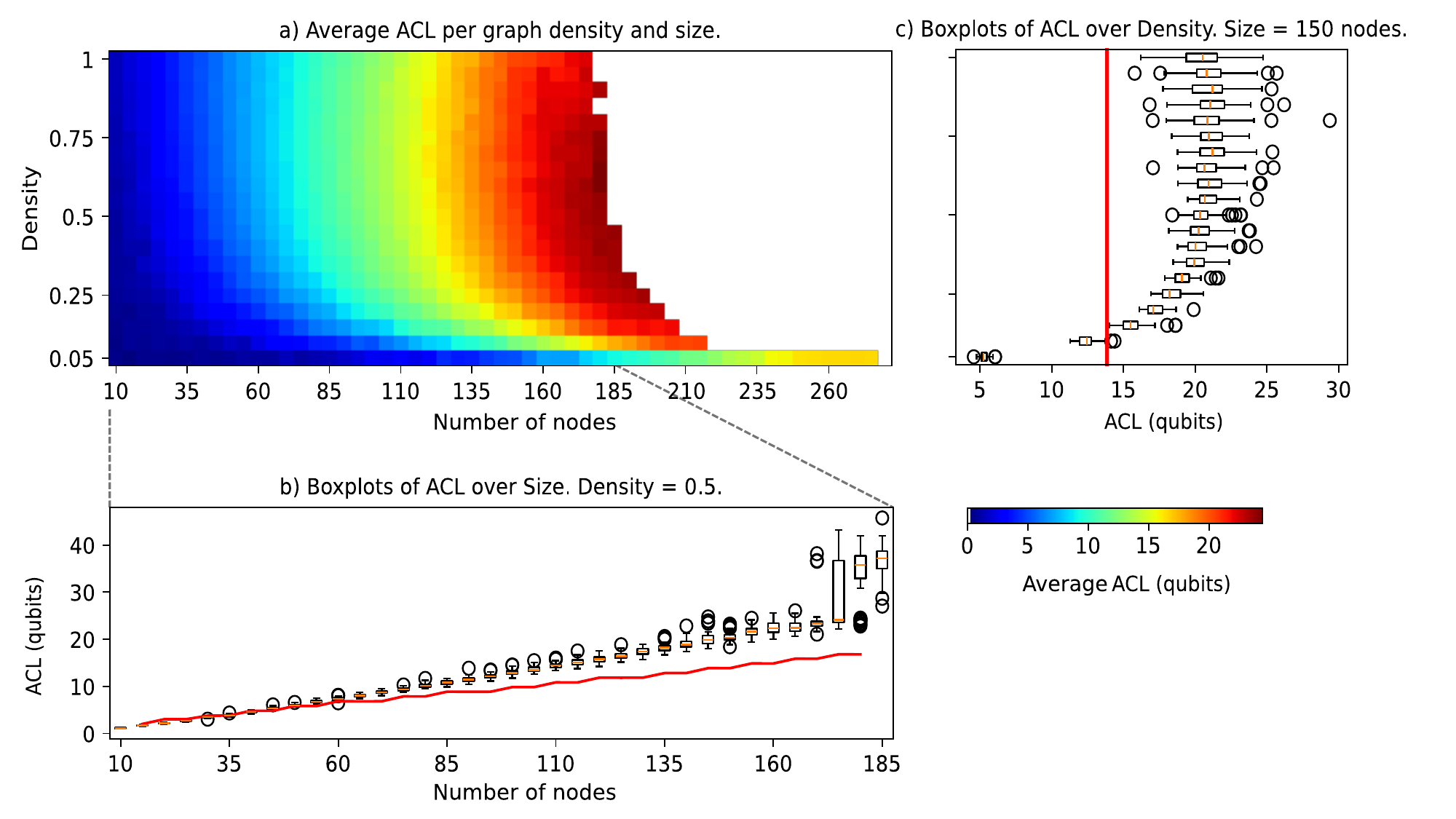}
    \caption{(a) Average ACL of valid embeddings from 64 ER graphs, per size and density. Invalid embeddings were excluded from the average.
    (b) Boxplots of ACL values of the 64 embeddings of the given size, for graph density fixed to 0.5. Red line represents the ACL of the CE embeddings.
    (c) Boxplots of ACL values of the 64 embeddings of the given density, for graph size fixed to 150. Red line represents the ACL of the CE embeddings. Boxplots display the median (orange line), the interquartile range (IQR, box), whiskers extending to 1.5×IQR, and outliers beyond as fliers.}
    \label{fig:imagenes}
\end{figure*}

\begin{itemize}
    \item Panel (a) in Figure \ref{fig:imagenes} illustrates the average ACL derived from 64 embeddings for each problem size and density values, considering only valid embeddings. The following trend can be observed in this figure: for densities below 0.5 the ACL rapidly increases with the density, but for densities greater than this value, the average ACL does not appear to vary with density and correlates solely with the size of the instances. Conversely, for densities below this value, there is a curve similar to that observed at the embedding boundary. In the sparsest instances, the ACL decreases rapidly as the density decreases. This phenomena is deeper studied in the observations of panel (c).

    \item Panel (b) features problem graphs with a fixed density of 0.5. Instead of displaying the average ACL, this panel presents boxplots that show the complete data distribution. This density value has been chosen to avoid bias towards either sparse or highly connected graphs. The boxplots in the figure exhibit a linear growth with problem size, similar to the linear growth presented by Clique Embedding, represented by the red line. Notably, when the size of the instances surpasses 60 nodes, the red line falls below the boxplots, indicating that CE has returned embeddings using a smaller number of qubits. Given that these 0.5 density instances are far from being completely connected— the specific scenario in which CE is expected to outperform Minorminer in terms of qubit usage—this suggests that Minorminer performs worse than anticipated.
    
    Moreover, as the size of the Ising models approaches the embeddable limit, the heuristic nature of the algorithm leads to increased instability, which is reflected in the suddenly wide range of the last boxplots. This instability results in a higher probability of Minorminer returning suboptimal embeddings. The stability of these results is examined in greater detail later in Figure \ref{fig:er_performance_variance}.
    
    \item Panel (c) displays boxplots of the ACL for embeddings identified in problem graphs with a fixed size of 150 nodes. This size has been selected to be close to Minorminer's embeddableness boundary observed in Figure \ref{fig:er_frontier}. Three key observations can be made solely from this panel.
    
    \begin{itemize}
        \item The first observation to be made is the low density value at which the boxplots (Minorminer's embeddings) present a higher ACL than the red line (CE's), revealing a strong limitation in Minorminer's ability to minimize qubit usage in embeddings. This happens between the density values 0.10 and 0.15, once again, significantly far of being close in density to completely connected graphs. More precisely, between 0.10 and 0.15 is the region in which the instance graphs surpasses the hardware topology graph in average degree (average number of edges per node). This observation provides insight into characterizing the Ising models that present challenges so difficult for Minorminer that it falls behind CE in embedding quality. A more general comparison between the qualities of Minorminer's and CE's embeddings is illustrated in Figure \ref{fig:mm_vs_clique}, where the relationship between this comparison and the average degree of the instances is examined in greater detail.

        \item The second observation is that as the graph density increases, the median ACL values initially rise; however, the curve plateaus once the density surpasses 0.4. This phenomenon is strongly related to the flattening of the embedableness boundary shown in Figure \ref{fig:er_frontier}. As mentioned in the observations of that figure, a possible explanation for this phenomenon lies in the challenges of embedding graphs with average degrees significantly higher than the average degree of the qubits in the hardware. Specifically, instances with 150 nodes and a density of 0.4 present an average degree of 60, which is four times that of Pegasus. Minorminer struggles with such embeddings, resulting in a network of extensive chains of around 20 qubits. Given that Pegasus has a diameter of 33—the maximum distance between any pair of nodes—it is likely that every pair of variable chains intersects at some point within the hardware graph. Consequently, with a similar number of qubits, it becomes feasible to embed even a fully connected graph by simply activating every coupler at those chain intersections. Note that this explanation should be further investigated in future work; however, this task is challenging because the minor-embedding problem is computationally intractable for these instances, and therefore Minorminer's results cannot be compared with the optimal embeddings.

        \item The third observation has to do with the range of the boxplots as the density increases. The range expands rapidly, indicating a high dispersion within the recorded ACL values for each density. For instance, in the 150-node instances, at a density of 0.7, the ACL values span from approximately 19 to 24, resulting on embedded Ising models of sizes spanning from 2850 to 3600 qubits. This substantial variation can lead to significantly poorer QA performance, as discussed in RQ1 (Section \ref{sec:embedding_quality_affection}). Additionally, the increasing range of the boxplots with density, combined with the flattening of their median values, suggests that there is potentially a higher likelihood of obtaining a better ACL valued embedding when calculating embeddings for fully connected instances compared to those with a density of 0.5. However, this preliminary observation warrants further in-depth study.
    \end{itemize}
\end{itemize}

As a part of the study of the quality of Minorminer's embeddings, motivated by the unexpectedly wide range of Ising models in which CE happens to outperform it, a broader comparison between embeddings ACLs of the two methods has been conducted. This comparison is presented in Figure \ref{fig:mm_vs_clique}, which illustrates the mean difference in ACL between the two methods for each size and density value.

\begin{figure}[t]
    \centering
    \includegraphics[width=0.95\linewidth]{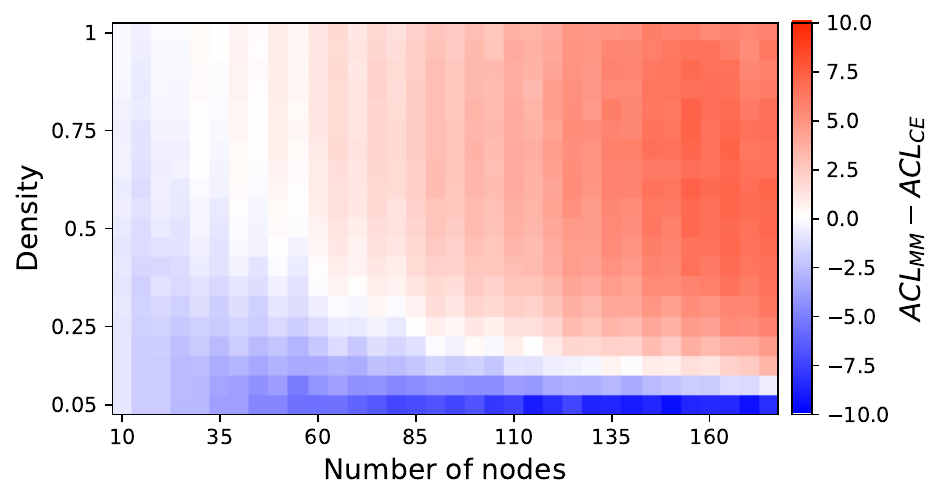}
    \caption{Difference in ACL between Minorminer's and CE's embeddings. The blue region corresponds to instances where Minorminer achieves better ACLs, while the red region indicates where CE performs better.}
    \label{fig:mm_vs_clique}
\end{figure}

It is anticipated that CE would excel in problems with high connectivity, as indicated by the flattening of the curve in Figure \ref{fig:imagenes}c, which suggests excessive qubit usage for non-sparse instance embeddings. However, as the problem size increases, CE outperforms Minorminer in instances with a density lower than 0.25. Upon examining the region where CE yields superior embeddings, it has been observed that this occurs precisely in instances where the average degree of the source graph surpasses that of the topology graph. This observation is intuitively reasonable due to the high complexity of the minor-embedding problem, particularly in cases where the source graph is more connected per node than the target graph. In such scenarios, the problem becomes critically challenging for a greedy search algorithm like Minorminer. In this experiment, the topology graph used has been a broken Pegasus. However, the same phenomenon has also been observed in a preliminary experiment with the new Zephyr topology of D-Wave \texttt{Advantage2\_prototype}. These preliminary results warrants further in-depth study.

\paragraph{Performance dispersion} Regarding the stability of the algorithm's embeddings quality, Figure \ref{fig:er_performance_variance} illustrates the standard deviation of ACLs for the 64 calculated embeddings per size and density, thereby showing the algorithm’s performance stability. 

\begin{figure}[t]
   \centering
   \includegraphics[height = 3.2cm]{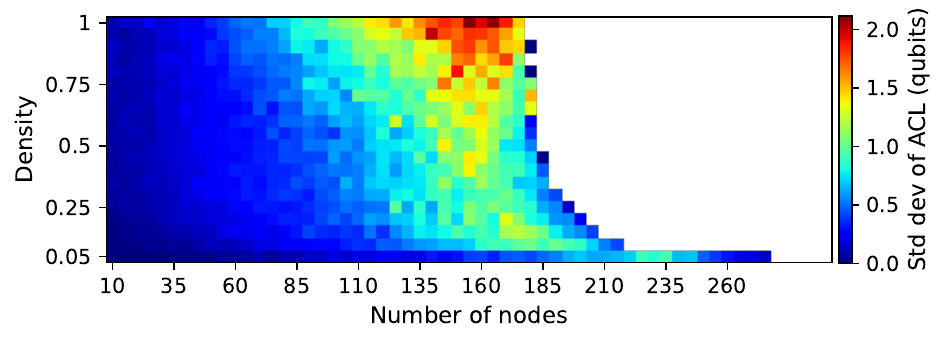}
   \caption{Standard deviation of ACLs calculated from the valid embeddings obtained during the 64 executions of Minorminer per size and density.}
   \label{fig:er_performance_variance}
\end{figure}

Minorminer exhibits high stability in small instances, indicated by a standard deviation lower than 0.5 qubits; and a notable stability in low-density larger ones, with this value slightly higher at around 1 qubit per chain. However, as instance size and density increase, the standard deviation significantly grows, reaching up to 2 qubits per chain. This variability implies that Minorminer can produce embeddings with differences in ACL of up to 4 qubits per chain for the same problem. Consequently, this not only results on a reduced likelihood of achieving the optimal embedding in a single run, but also on a significant probability of obtaining a notably bad embedding. Note that in the vicinity of the embeddableness boundary, the standard deviation decreases suddenly. This phenomenon occurs because the study only considers valid embeddings. Consequently, this reduction does not imply that the algorithm is stable; rather, it reflects the limited number of valid embeddings available.

The substantial dispersion in Minorminer's embedding ACL values is particularly concerning within the quantum annealing framework of D-Wave, where the algorithm is typically executed only once to expedite the overall process. Therefore, there is a considerable likelihood that the embedded Ising model will be significantly larger than necessary, resulting in a significantly higher error rate in the end-to-end quantum annealing method, as seen in Section \ref{sec:embedding_quality_affection}.

\begin{figure}[t]
   \centering
   \includegraphics[height = 3.2cm]{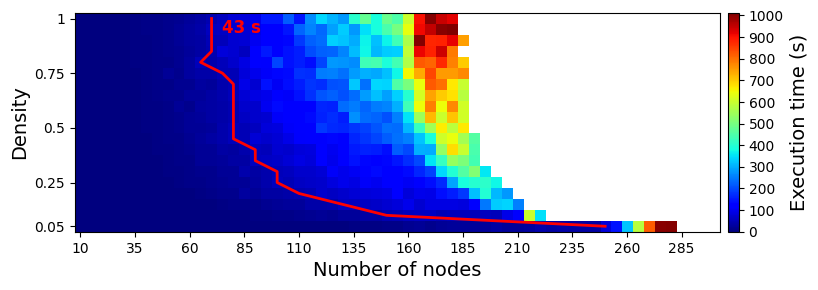}
   \caption{Execution time of Minorminer for each experiment, measured as the time required to compute 64 embeddings in parallel using the 64-core machine described in Section \ref{sec:setup}. The red line represents the total time required by CE, including both preprocessing and the maximum observed embedding retrieval time of 43 seconds.}
   \label{fig:execution_times}
\end{figure}

\paragraph{Execution time}Before comparing the runtime performance of Minorminer and Clique Embedding (CE), it is crucial to note a fundamental difference between them. CE includes a one-time preprocessing step for a given target graph, which runs in polynomial time. On our system, this step takes approximately 43 seconds for the D-Wave \texttt{Advantage\_system4.1}'s broken Pegasus target graph. Once completed, CE can embed any compatible source graph in around $10^{-5}$ seconds per instance—faster than Minorminer even for the smallest cases.

In contrast, Minorminer constructs each embedding from scratch, using both source and target graphs, and iteratively refines the result. Figure \ref{fig:execution_times} shows Minorminer’s execution times, with a red line indicating CE’s worst-case runtime under the conservative assumption that preprocessing is repeated for every instance—though in practice, it is only performed once.

Execution time increases with both problem size and graph density. Larger instances require more computation per iteration, while denser graphs lead to more complex vertex-model configurations. Additionally, runtime grows with embedding difficulty, as Minorminer uses a greedy algorithm with a patience parameter (default 10), halting only after a fixed number of iterations without improvement. As a result, harder problems take longer to embed.

The most notable observation is the absolute runtime: for the largest and densest graphs, Minorminer frequently hits the 1000-second limit while searching for valid embeddings. Only for small and sparse cases is it faster than CE’s one-time preprocessing.

As a closing remark, it is worth noting that, beyond the standard use of Minorminer, some alternative strategies have gained attention. In particular, the concept of leveraging graph layout information to generate initial embeddings—subsequently refined by Minorminer\footnote{\url{https://docs.dwavequantum.com/en/latest/ocean/api_ref_minorminer/source/layout_embedding.html}}—was independently introduced in \cite{pinilla2019layout} and \cite{zbinden2020embedding}, and has since been incorporated into D-Wave’s Ocean SDK under the name Layout Embedding. While the main objective of this work has been to conduct a thorough and rigorous evaluation of the most widely used embedding methods, a comparative analysis including Layout Embedding is provided in ~\ref{app:layout}, offering additional context on its performance relative to both Minorminer and CE.

\begin{Summary}[title={\bfseries Summary RQ2}]{}{secondsummary}
    \textcolor{black}{Across all experiments embedding Erdős–Rényi instances of varying sizes and densities into Pegasus, Minorminer—D‑Wave’s default minor-embedding algorithm—frequently underperforms relative to Clique Embedding, even though CE is designed for worst‑case qubit usage on non‑complete graphs.}
Key experimental findings include:
\begin{itemize}
    \item Embedding Success: Minorminer fails to outperform CE in finding valid embeddings for graph densities between 0.5 and 1.
    \item Embedding Quality: Minorminer only achieves shorter average chain lengths when the source graph’s average degree is lower than that of the hardware graph.
    \item Consistency: Minorminer exhibits high variability in quality for larger, denser problems, while CE remains deterministic.
    \item Speed: When including preprocessing time, CE is faster in over half the cases, often by orders of magnitude, particularly for the largest and densest instances.
\end{itemize}
Conclusion: For Ising problems that fall within CE’s embeddable range, CE is preferable—not only for fully connected graphs but also for moderately connected ones—due to its higher quality, greater consistency, and polynomial runtime performance.

\end{Summary}

\section{Final Remarks and Future Directions}\label{sec:conclusions}

Despite significant advancements in the connectivity of quantum annealing processors, the realization of fully connected topology hardware on superconducting-based quantum annealers remains impractical due to their inherently two-dimensional architecture. Furthermore, optimizing quantum annealing performance necessitates addressing the minor-embedding problem with maximal efficiency, particularly concerning the required amount of qubits. Consequently, the minor-embedding problem is likely to persist as a substantial challenge in the future.

As a result of the conducted experiments, our investigation has identified the following answers to the research questions posed:
\begin{itemize}
    \item RQ1: The minor-embedding problem has a critical impact on the performance of the end-to-end quantum annealing paradigm. \textcolor{black}{This is supported by the observation that, when comparing embedded instances of different sizes but with the same number of qubits in the embedding, the smaller problems yield worse solutions than the larger ones. It is further reinforced by the} clear correlation between the average chain length of the embeddings and the relative errors of the solutions obtained—even under optimal parameter settings.
    \item RQ2: The default minor-embedding algorithm, Minorminer, \textcolor{black}{has shown significant room for improvement when used to embed Erdös-Rényi graphs into a broken Pegasus}. CE outperforms Minorminer in terms of embedding quality, result stability, and execution time across a significantly large set of non-sparse and non-small instances, where CE should be taken as the worse-case scenario. This set is larger than just the fully connected or nearly fully connected instances, which are theoretically the problems for which CE was created and is recommended to be used. Specifically, CE has outperformed Minorminer in embedding quality for all Ising models of size embeddable by CE (up to 177 nodes) with an average degree higher than that of the topology graph.
\end{itemize}

%Based on the conclusions of this study, we propose the following strategy for solving generic Ising models (or QUBOs): If the problem falls within the embeddable range for both Minorminer and CE, evaluate the average degree of the problem's graph and compare it with that of the topology graph to determine the optimal embedding method. In scenarios where the average degree of the problem's graph significantly exceeds that of the topology graph, the use of CE may be preferable over Minorminer. This perspective is novel, and we expect it to perform better than using Minorminer as the default in every non-fully-connected instance. Implementing this or other strategies to minimize qubit usage in embeddings is essential for fully leveraging the capacity of quantum annealers.

For future research, two primary directions are proposed, as illustrated in Figure \ref{fig:futurework}. 
The first direction involves extending the experimental framework to strengthen the conclusions of this study. \textcolor{black}{In relation to the first research question, it would be valuable to examine how different chain‑resolution strategies—beyond majority voting—affect the outcomes of experiments with embedded instances. Concerning the second research question, }
the relationship between the ACL differences of embeddings produced by Minorminer and CE and the average degree of the input graph—discussed in Figure \ref{fig:mm_vs_clique}—should be investigated more thoroughly across a larger set of target graphs. It would also be valuable for future research to conduct a deeper investigation into the impact of chain length uniformity on quantum annealing performance. Previous studies have highlighted this property as beneficial in minor embeddings \cite{boothby2016fast, venturelli2015quantum}. 
% A more comprehensive analysis could uncover valuable insights that inform the development of novel embedding optimization strategies.

To strengthen the conclusions of this study, \textcolor{black}{although the Erdös-Rényi algorithm generates a wide range of varied graphs, the study could be expanded to include new graphs generated with different algorithms, such as d-regular graphs, k-nearest neighbor graphs in Euclidean point sets \cite{penrose2003random} or Barabási-Albert graphs \cite{barabasi1999emergence} as in \cite{zbinden2020embedding}}. Additionally, this study could be repeated for any new quantum annealer relevant to the literature, whether it is a new model from D-Wave, such as the \texttt{Advantage2}, or from a different manufacturer. Furthermore, the experiments should be extended to test the performance of other algorithms.

% \vspace{0.5cm}
\begin{figure}[t]

\begin{adjustbox}{minipage=0.45\textwidth,valign=t}
\begin{Futurework}[title={\bfseries Extension of Experimental Scope:}]{}{futurework1}
\begin{itemize}
    \item Explore a broader range of benchmark problem graphs
    \item Evaluate similar results across different quantum annealing hardware platforms
    \item Compare additional embedding algorithms \textcolor{black}{and broken chain resolution methods}
    \item Analyze the impact of embedding chain length uniformity on solution quality and performance
    % \item Compare other aspects of embedding algorithms, such as as execution time
    \item Examine the relation between Minorminer and Clique Embedding performance for other problem and topology graphs
\end{itemize}
\end{Futurework}
\end{adjustbox}
\hfill
\begin{adjustbox}{minipage=0.45\textwidth,valign=t}
\begin{Futurework2}[title={\bfseries Innovative Embedding Strategies:}]{}{futurework2}
\begin{itemize}
    \item Utilizing a combination of CE and Minorminer to embed problems
    \item Enhancing solution quality through multiple executions (potentially parallel) of Minorminer
    \item Modifying Minorminer to a problem-aware algorithm
\end{itemize}
\end{Futurework2}
\end{adjustbox}
\caption{Two directions of future work.}
\label{fig:futurework}
\end{figure}

The second direction of future research involves exploring novel embedding strategies, including modifications based on Minorminer, as discussed in \cite{pinilla2019layout} and \cite{zbinden2020embedding}, as well as entirely different approaches to the problem, such as those presented in \cite{Sugie, bernal2020integer, goodrich2018optimizing, ngo_charme_2024}. Potential straightforward ideas derived from this investigation for developing new methods include:
\begin{itemize}
    \item \textit{Utilizing a combination of CE and Minorminer to embed problems}: An example of this approach can be seen in \cite{zbinden2020embedding}, where the authors propose initializing an embedding with CE and then refining it with Minorminer as a local search improvement method.
    
    \item \textit{Enhancing solution quality through multiple executions of Minorminer}: Based on the variability of Minorminer's results illustrated in Figure \ref{fig:er_performance_variance}, there are certain problems where the algorithm exhibits high instability. In these cases, executing Minorminer multiple times can significantly enhance the final solution's quality. Although multiple executions require more time, they could be performed in parallel, if feasible.

    \item \textit{Modifying Minorminer to a problem-aware algorithm}: This process involves considering the weight of variables and interactions for minor-embedding. For example, it may include embedding variables with a higher impact on solution energy into shorter chains, with the objective of prioritizing the accurate assessment of these high-impact variables during problem-solving. Even if this idea does not necessarily lead to better results on ACL of embeddings, it could lead to a lower impact of ACL on the final performance of the annealer.
\end{itemize}

Beyond the strategies mentioned above, this work justifies both previous and future efforts to develop better minor-embedding solutions. Moreover, we encourage other teams to consider this work not only when designing new minor-embedding algorithms but also when developing problem partitioning techniques, new hybrid algorithms, or related methodologies.

\section*{Acknowledgements}

%This work was supported by the Basque Government through Plan complementario comunicación cuántica (EXP. 2022/01341) (A/20220551) and the BIKAINTEK PhD support program. During the preparation of this work, the authors used Microsoft Copilot to improve the language and readability of the manuscript. After using this tool/service, the authors reviewed and edited the content as needed, taking full responsibility for the content of the publication.

This work was supported by the Basque Government through the BIKAINTEK PhD support program as well as from the HORIZON-CL4-2022-QUANTUM-01-SGA project 101113946 OpenSuperQPlus100 of the EU Flagship on Quantum Technologies. The authors used Microsoft Copilot to enhance the language and clarity of the paper. The authors retain complete responsibility for the content of this research.

% ---- Bibliography ----
\bibliography{biblio}

\appendix

\section{Small and Medium inscances on RQ1.2}
\label{app:small_medium}

\textcolor{black}{The experiment carried out to answer RQ1 in subsection RQ1.2 was executed on five instances randomly generated with Erdös-Rényi's method of 150 nodes and 0.5 density. However, this experiment was also carried out in other five random instances of 50 nodes (small) and in other five of 100 nodes (medium). The results on small and medium instances are presented in this appendix.}

\begin{figure*}[t]
    \centering
    \includegraphics[width = 1\linewidth]{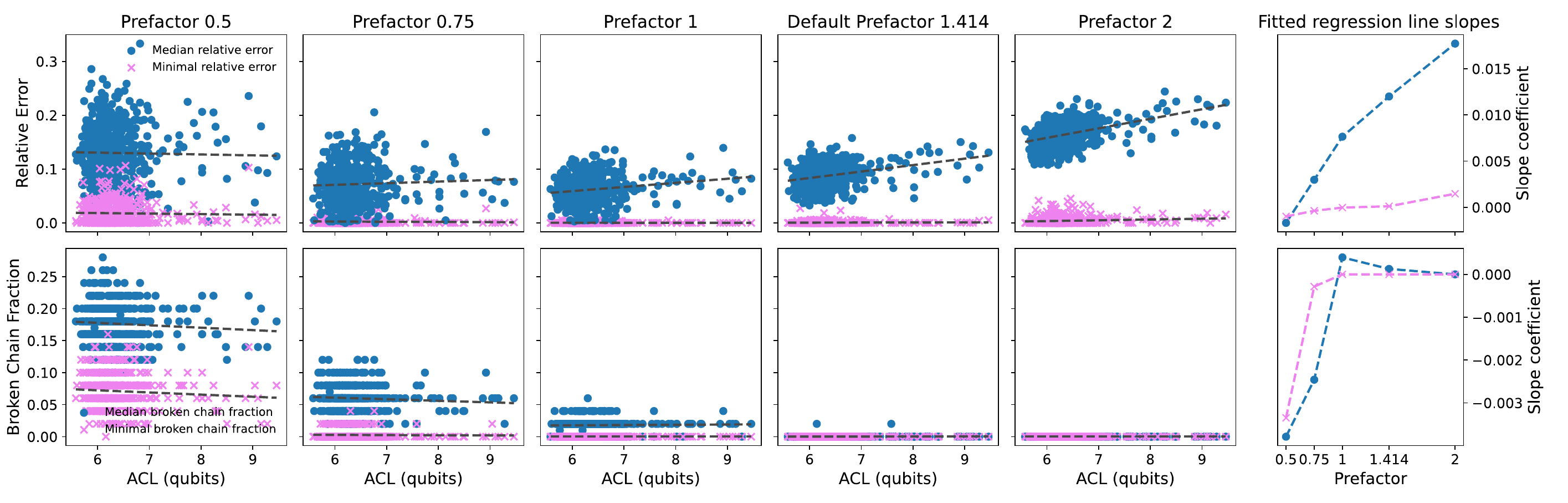}
    \includegraphics[width = 1\linewidth]{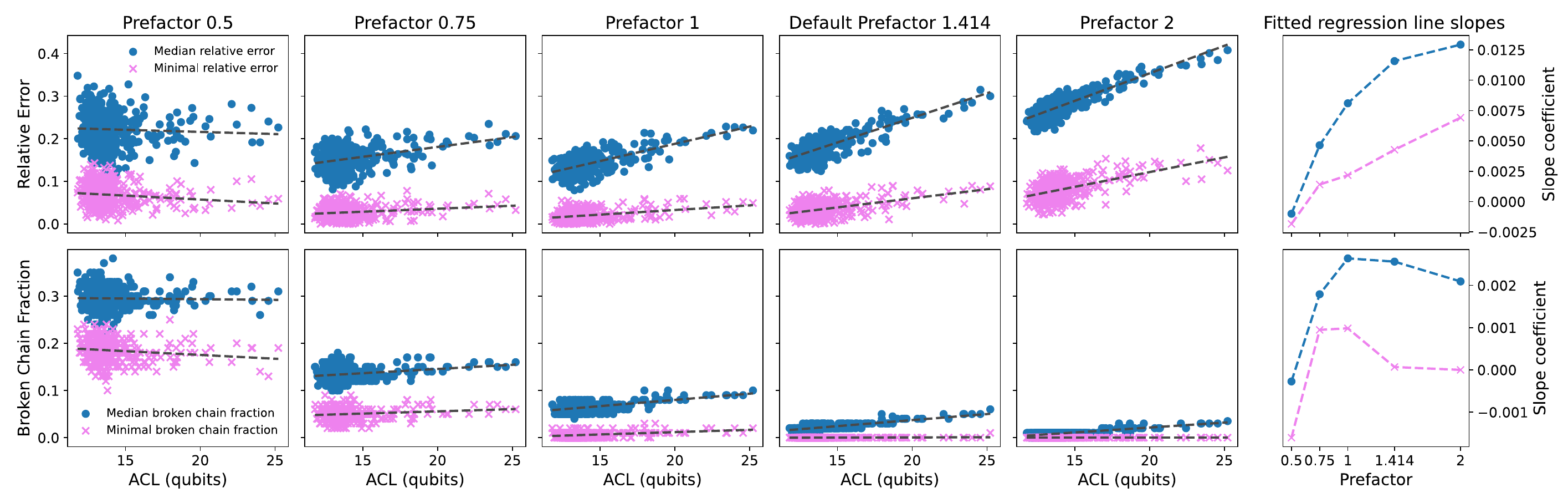}
    \caption{\textcolor{black}{Top: results for 50‑node instances; bottom: results for 100‑node instances. The first five top panels show the median and minimum relative errors of postprocessed solutions from 100 samples, each produced using a different embedding, for five randomly selected problem instances (see Table \ref{table:setup}). The first five bottom panels show the median and minimum broken‑chain fractions plotted against the ACL of the embeddings. The last column reports the slopes of the linear regressions fitted for each chain‑strength prefactor. For each problem instance and embedding, five embedded Ising models were generated by varying the chain strength via the prefactor in the UTC function, with each chain‑strength configuration represented as a separate column, and linear regression models were fitted to each corresponding set of points.}}
    \label{fig:RQ1.2_small}
\end{figure*}

\textcolor{black}{The results are, as expected, consistent with those obtained for the large instance presented in Section\ref{sec:embedding_quality_affection}, although the trends are not as pronounced. In both the small and medium cases, the optimal chain strength is typically the value obtained with a UTC prefactor of 1, though in some instances—usually those with higher ACL requirements—the optimum shifts to 0.75. As in the large instance, slightly reducing the chain strength can improve performance for embeddings with high ACL. Nevertheless, embeddings with lower ACL still produce better solutions overall, provided the chain strength is not reduced excessively. Although the conclusions for the small instance are less clear—mainly due to its lower ACL values—the overall trends remain unchanged. Specifically, the optimal prefactor is stable across problem sizes; excessively low prefactors lead to near‑random sampling, while excessively high prefactors degrade solution quality despite reducing the broken‑chain fraction. Moreover, both the relative error and the broken‑chain fraction exhibit an approximately linear increase with ACL. The behavior of the slopes mirrors that of the large instance: the slope of the error increases rapidly at first and then more slowly as chain strength grows, whereas the slope of the broken‑chain fraction initially rises due to the near‑random regime and subsequently decreases once higher chain strengths substantially suppress chain breaks.}

\section{Layout Embedding performance}
\label{app:layout}

The most prominent and practically applicable method among the cited works in Section \ref{sec:intro} of this paper is the one introduced independently by Pinilla et al. \cite{pinilla2019layout} and Zbinden et al. \cite{zbinden2020embedding}. Referred to as "\textit{Layout Aware Embedding}" by the former and "\textit{Spring-Based Minorminer}" by the latter, both approaches are fundamentally similar. Notably, this method has recently been integrated into D-Wave’s SDK under the name \textit{Layout Embedding }(LE), making it readily accessible for experimentation. In this appendix, the performance of this method and its comparison with Minorminer and Clique Embedding is shown.

The experiment conducted is identical to the one performed using Minorminer to address RQ2, as detailed in Section \ref{sec:setup}. The LE method run is the one in the D-Wave Ocean SDK\footnote{\url{https://docs.dwavequantum.com/en/latest/ocean/api_ref_minorminer/source/layout_embedding.html}}, and both its "\textit{layout}" and "\textit{placement}" functions have been set to their default value. As in Subsection \ref{sec:MMPerformance}, this appendix examines the success rate, quality, variability, and execution time of LE. As noted in prior work \cite{pinilla2019layout, zbinden2020embedding}, this method tends to outperform Minorminer on sparse instances, but underperforms on denser ones.

\paragraph{Embeddableness boundary} The goal is to identify the boundary between embeddable and non-embeddable instance graphs on D-Wave’s topology. While this boundary depends on the hardware, it is also shaped by the limitations of the heuristic embedding algorithm. Since the problem is intractable, we estimate this boundary probabilistically, based on the likelihood of finding a valid embedding. In order to see these values in perspective, we compare them with those by Minorminer and with the embedableness boundary of CE.

\begin{figure}[t]
    \centering
    \includegraphics[scale=0.4]{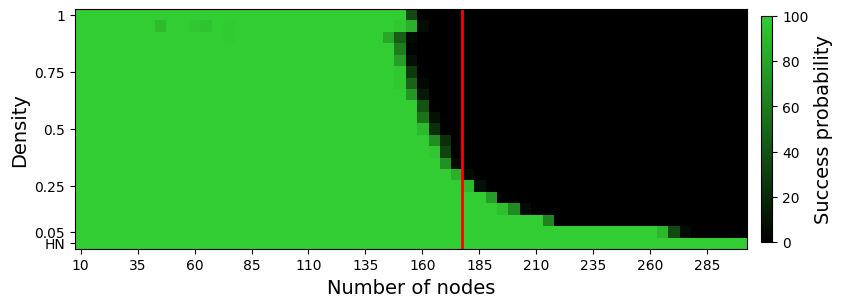} \\
    \includegraphics[scale=0.4]{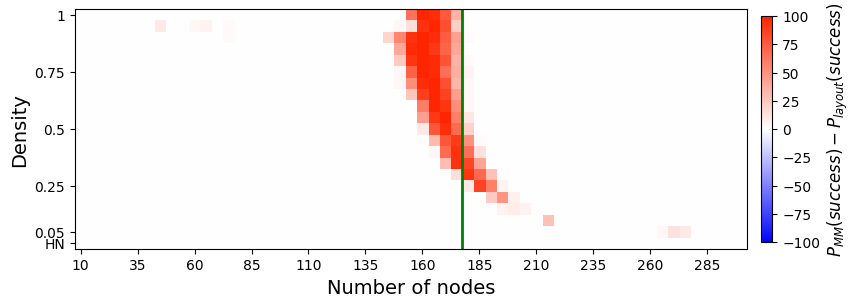}
    \caption{(Top) Probability of success of Layout Embedding finding a valid embedding for ER graphs into D-Wave \texttt{Advantage\_system4.1}'s broken Pegasus, as a function of density and size of problem graph. (Bottom) Difference in probabiliy between Minorminer and LE results. The red line (green on bottom image) represents the embeddable limit for CE. HN corresponds to Hardware Native instances, taken as the lowest significant density value.}
    \label{fig:layout_frontier_dual}
\end{figure}

As shown in Figure \ref{fig:layout_frontier_dual}, in dense instances, LE shows a lower probability of finding valid embeddings compared to Minorminer. For sparser instances, no clear advantage was observed, which may be due to suboptimal parameter settings or the specific characteristics of the instances used in this study.

\paragraph{Embedding quality}
As shown in RQ1, low ACL values are key to quantum annealing performance and to embedding larger, denser instances. Figure \ref{fig:layout_performance} analyzes how ACL varies with graph size and density for LE embeddings, helping to better understand the embeddability boundary. Key observations include:

\begin{figure}[t]
    \centering
    \includegraphics[scale=0.4]{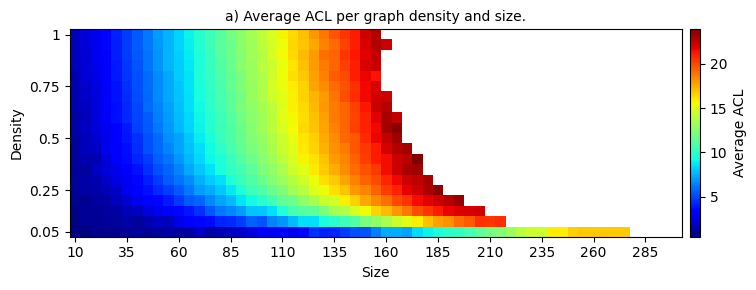} \\
    \includegraphics[scale=0.4]{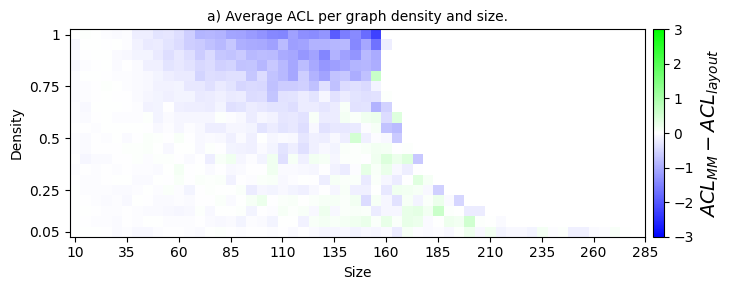} \\
    \includegraphics[scale=0.4]{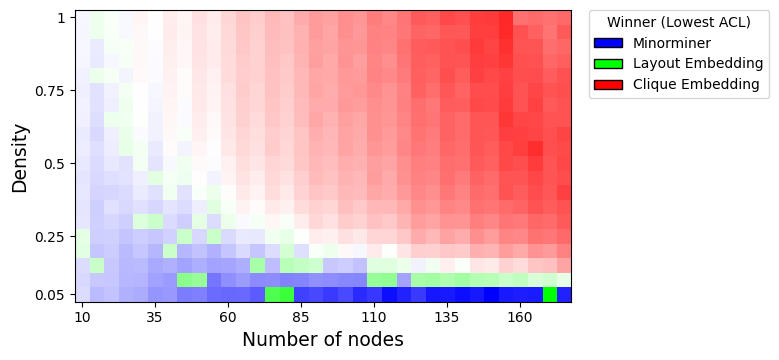}
    \caption{(Top) Average ACL of valid embeddings from 64 ER graphs, per size and density. Invalid embeddings were excluded from the average. (Center) Difference in ACL between Minorminer and LE average ACLs. Blue points indicate where Minorminer achieves the lowest ACL, while green points correspond to LE. (Bottom) ACL differences among embeddings produced by Minorminer, LE, and CE. Blue regions indicate where Minorminer achieves the lowest ACL, green where LE performs best, and red where CE yields the best results.}
    \label{fig:layout_performance}
\end{figure}

\begin{itemize}
    \item The ACL growth pattern of LE embeddings is similar to that of Minorminer, which is expected since LE runs Minorminer with a different initialization.
    \item When comparing performance between Minorminer and LE in terms of ACL, LE sometimes outperforms Minorminer on sparse instances, as noted in earlier studies, but not on denser ones.
    \item As shown in Section \ref{sec:MMPerformance}, CE is more robust than Minorminer across a range of moderately dense instances. This trend holds when comparing CE with LE as well. LE only competes with the other methods on sparse instances, and even then, inconsistently. When CE underperforms, Minorminer generally outperforms LE, though exceptions exist. However, CE dominates across a broad region where it clearly delivers superior results.
\end{itemize}

\paragraph{Performance dispersion} Figure \ref{fig:layout_performance_variance} shows the standard deviation of ACLs across 64 embeddings per size and density, highlighting the stability of the algorithm’s performance.

\begin{figure}[t]
   \centering
   \includegraphics[height = 3.2cm]{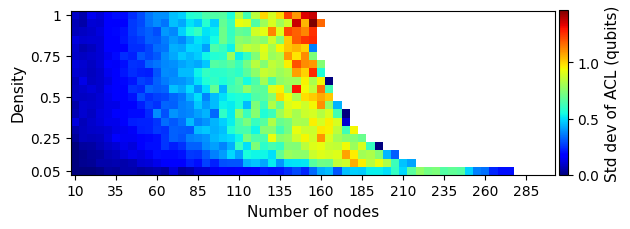}
   \caption{Standard deviation of ACLs calculated from the valid embeddings obtained during the 64 executions of Layout Embedding per size and density.}
   \label{fig:layout_performance_variance}
\end{figure}

As anticipated, LE exhibits instability growth with size and density similar to that of Minorminer, given that both rely on the same underlying algorithm with differing initializations. However, LE demonstrates slightly lower standard deviation values in ACL across embeddings (see colorbars in Figures \ref{fig:er_performance_variance} and \ref{fig:layout_performance_variance}, with maxima of 2 and 1.5, respectively). This difference is consistent with the fact that Minorminer initializes embeddings with empty chains, introducing greater randomness compared to LE’s more structured initialization.

\paragraph{Execution time} Finally, run times from Layout Embedding and their comparison with those of and CE are shown in Figure \ref{fig:layout_times}. The figures reveal that the execution time growth for Layout Embedding follows a similar trend to that of Minorminer. However, LE consistently requires significantly more time—often exceeding Minorminer by over 300 seconds—even for small and sparse instances where Minorminer completes in under 10 seconds. This discrepancy arises from the initialization phase: while Minorminer begins with empty chains, LE must first compute a layout of the source graph and assign qubit sets to variables based on distance calculations \cite{pinilla2019layout, zbinden2020embedding}. This additional processing inherently results in longer initialization times for LE.

\begin{figure}[t]
    \centering
    \includegraphics[height = 3.2cm]{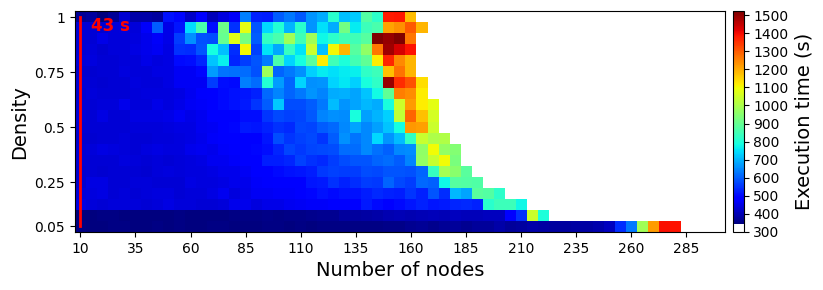} \\
    \includegraphics[height = 3.2cm]{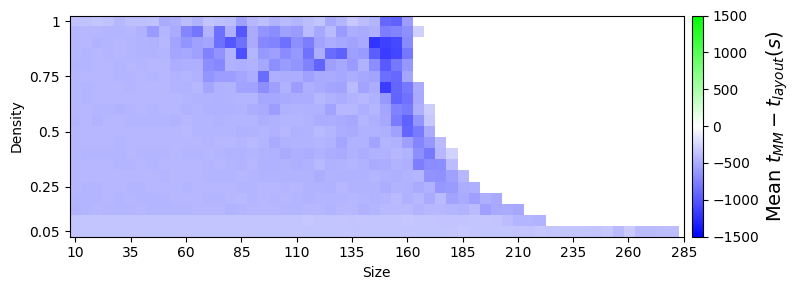}
    \caption{(Top) Execution time of Layout Embedding for each experiment, measured as the time required to compute 64 embeddings in parallel using the 64-core machine described in Section \ref{sec:setup}. The red line represents the total time required by CE, including both preprocessing and the maximum observed embedding retrieval time of 43 seconds. (Bottom) Difference in execution times between Minorminer and Layout Embedding under the same experimental conditions.}
    \label{fig:layout_times}
\end{figure}

In summary, Layout Embedding offers a promising alternative for the initialization that, in certain scenarios, outperforms direct application of Minorminer. However, it exhibits significant limitations in execution time. It is important to note that the experiments presented in this appendix are preliminary, as the evaluation of Layout Embedding was not the primary focus of this work. Further investigation, including parameter tuning, may yield different and potentially more favorable results.

The approach adopted in Layout Embedding is particularly noteworthy, as it outperforms direct use of Minorminer in certain scenarios. However, it is important to recall, as discussed in Section \ref{sec:MMPerformance}, that Minorminer itself exhibits significant limitations when compared to CE—especially in terms of solution quality and execution time across a wide range of moderately dense problems. These limitations also extend to Layout Embedding, indicating that further improvements are necessary to achieve consistently superior performance.

\end{document}